\pdfoutput=1
\RequirePackage{ifpdf}
\ifpdf 
\documentclass[pdftex]{sigma}
\else
\documentclass{sigma}
\fi

\numberwithin{equation}{section}

\newtheorem{Theorem}{Theorem}[section]
\newtheorem*{Theorem*}{Theorem}
\newtheorem{Corollary}[Theorem]{Corollary}
\newtheorem{Lemma}[Theorem]{Lemma}
\newtheorem{Proposition}[Theorem]{Proposition}
\newtheorem{Conjecture}[Theorem]{Conjecture}
 { \theoremstyle{definition}
\newtheorem{Definition}[Theorem]{Definition}

\newtheorem{Remark}[Theorem]{Remark} }

\begin{document}
\allowdisplaybreaks

\newcommand{\arXivNumber}{2410.21210}

\renewcommand{\PaperNumber}{022}

\FirstPageHeading

\ShortArticleName{Mode Stability of Hermitian Instantons}

\ArticleName{Mode Stability of Hermitian Instantons}

\Author{Lars ANDERSSON~$^{\rm a}$, Bernardo ARANEDA~$^{\rm b}$ and Mattias DAHL~$^{\rm c}$}

\AuthorNameForHeading{L.~Andersson, B.~Araneda and M.~Dahl}

\Address{$^{\rm a)}$~Beijing Institute of Mathematical Sciences and Applications, Beijing 101408, P.R.~China}
\EmailD{\href{mailto:lars.andersson@bimsa.cn}{lars.andersson@bimsa.cn}}

\Address{$^{\rm b)}$~School of Mathematics and Maxwell Institute for Mathematical Sciences, \\
\hphantom{$^{\rm a)}$}~University of Edinburgh, EH9 3FD, UK}
\EmailD{\href{mailto:baraneda@ed.ac.uk}{baraneda@ed.ac.uk}}

\Address{$^{\rm c)}$~Institutionen f\"or Matematik, Kungliga Tekniska H\"ogskolan, 100 44 Stockholm, Sweden}
\EmailD{\href{mailto:dahl@kth.se}{dahl@kth.se}}

\ArticleDates{Received December 13, 2024, in final form March 24, 2025; Published online April 01, 2025}

\Abstract{In this note, we prove the Riemannian analog of black hole mode stability for Hermitian, non-self-dual gravitational instantons which are either asymptotically locally flat (ALF) and Ricci-flat, or compact and Einstein with positive cosmological constant. We show that the Teukolsky equation on any such manifold is a positive definite operator. We also discuss the compatibility of the results with the existence of negative modes associated to variational instabilities.}

\Keywords{gravitational instantons; stability; spinor methods}

\Classification{83C60; 53C25; 32Q26}

\section{Introduction}

A gravitational instanton is a complete and orientable $4$-dimensional, Ricci-flat and asymptotically flat Riemannian manifold. In this note, we are interested in Hermitian asymptotically locally flat (ALF) instantons,
as well as compact Einstein--Hermitian 4-spaces. Recall that a~Riemannian manifold is Hermitian if there is an integrable almost-complex structure which is compatible with the metric. In both the Ricci-flat and Einstein cases, the Goldberg--Sachs theorem adapted to Riemannian signature implies that the above is equivalent to the curvature being algebraically special, see \cite{MR707181, flaherty1976hermitian}.
For the definition of a Riemannian manifold being ALF, see \cite[Definition 1.1]{Biquard:Gauduchon}.
A classification of Hermitian, toric, ALF gravitational instantons has been given by Biquard and Gauduchon \cite{Biquard:Gauduchon}. The toric assumption was recently removed in~\cite{li2023classification}.
In fact an ALF Hermitian non-K\"ahler instanton belongs to the Euclidean Kerr, Taub-bolt, or Chen--Teo families, or is Taub-NUT with the opposite orientation. We shall ignore the last case, since the Taub-NUT manifold is half-flat, see Remark \ref{remark:halfflat} below.
The Euclidean Kerr and Taub-bolt instantons are both Petrov type D, that is both the self-dual and anti-self dual parts of the Weyl tensor are algebraically special and the manifold is non-K\"ahler with respect to both orientations.
The Chen--Teo instanton \cite{2011PhLB..703..359C}, on the other hand is Hermitian but has algebraically general anti-self dual Weyl tensor \cite{aksteiner2021gravitational}. The classification of Hermitian non-K\"ahler ALF instantons complements the classification of hyperk\"ahler instantons \cite{10.4310/ACTA.2021.v227.n2.a2, chen2019gravitational, chen2021gravitational,Kronheimer:1989pu, Minerbe+2011+47+58, 2021arXiv210812991S}, and furthermore bears a close similarity to the classification of compact Einstein--Hermitian non-K\"ahler manifolds by LeBrun \cite{MR2899877}.

Local rigidity of Hermitian ALF instantons was proved by Biquard, Gauduchon, and LeBrun, see~\cite{biquard2023gravitational}. Given a Hermitian gravitational instanton $(M, g_{ab})$, there is an open neighborhood~$\mathcal{O}$ of~$g_{ab}$ in the space of metrics on $M$ such that any gravitational instanton $g'_{ab} \in \mathcal{O}$ is also Hermitian.

\begin{Remark}
The above result
leaves open the question of whether a Hermitian ALF gravitational instanton is integrable
in the sense of \cite[Section 12.E]{MR2371700}.
The notion of mode stability for instantons that we study in this paper is a step towards addressing this problem, see below.
\end{Remark}

Let $g_{ab}$ be a Riemannian metric on a 4-dimensional manifold $M$ with Ricci tensor $R_{ab}(g)$ and scalar curvature $S(g)$. We define the Einstein tensor $E_{ab}(g)$ by
\begin{align}\label{eq:EinsteinOp}
E_{ab}(g) = R_{ab}(g) - \frac{S(g)}{4}g_{ab},
\end{align}
see \cite[Section~12.26]{MR2371700}. The metric $g_{ab}$ is Einstein if $E_{ab}(g)=0$. We shall be interested~in~two classes of Einstein metrics: ALF Ricci-flat manifolds, and compact Einstein manifolds with ${S(g)\neq 0}$. Let $g(s)_{ab}$ be a 1-parameter family of metrics, with
\begin{align} \label{eq:LinMetric}
\frac{{\rm d}}{{\rm d}s} g(s)_{ab} |_{s=0} = h_{ab},
\end{align}
and let $E(s)_{ab}$ be the Einstein operator of $g(s)_{ab}$. Assume that $g(0)_{ab}=g_{ab}$ is Einstein and that~$h_{ab}$ is a linearized Einstein perturbation, that is $h_{ab}$ satisfies the equation
\begin{align} \label{eq:DRic}
\frac{{\rm d}}{{\rm d}s} E(s)_{ab}|_{s=0} = 0.
\end{align}
If $(M,g_{ab})$ is ALF Ricci-flat, then $E(s)_{ab}$ in \eqref{eq:DRic} can be replaced by $R(s)_{ab}$, and we say that~$h_{ab}$ is an ALF vacuum perturbation if for any integer $k \geq 0$, $\nabla^k h_{ab} = O\bigl(r^{-1-k}\bigr)$. See Definition~\ref{def:Onotation} below for notation. The ALF Ricci-flat instanton $(M,g_{ab})$ is said to be integrable if for any ALF vacuum perturbation $h_{ab}$, there is a 1-parameter family $g(s)_{ab}$ of ALF Ricci flat metrics such that ${\rm d}/{\rm d}s \, g(s)_{ab} |_{s=0} = h_{ab}$.
Similarly, if $(M,g_{ab})$ is compact Einstein, then we say that it is integrable if for any Einstein perturbation $h_{ab}$, there is a 1-parameter family $g(s)_{ab}$ of Einstein metrics such that ${\rm d}/{\rm d}s \, g(s)_{ab}|_{s=0} = h_{ab}$.

The above notion of integrability (which follows \cite[Section 12.E]{MR2371700}) intends to describe the space of solutions to the Einstein equations around the given solution $g_{ab}$. This space is not necessarily a manifold, so the required curve of metrics may not exist. Integrability in this sense is also known as {\em linearization stability} as defined by Fischer and Marsden \cite{fischer_marsden_1975}. For Lorentzian metrics satisfying the Einstein equations, this concept was introduced by Choquet-Bruhat and Deser~\cite{CHOQUETBRUHAT1973165}, and then thoroughly studied by Fischer, Marsden, Moncrief, and Arms~\mbox{\cite{ARMS198281,Fischer1980, fischer_marsden_1973, Moncrief1, Moncrief2}}. In~particular, for vacuum spacetimes with compact Cauchy hypersurfaces, Moncrief showed~\mbox{\cite{Moncrief1, Moncrief2}} that linearization stability {\em fails} if the solution has Killing vector fields.

A step towards addressing integrability of Hermitian instantons is the problem of mode stability, which is a concept that originates in the study of dynamical stability of Lorentzian black holes, but can be adapted to Riemannian metrics.
Let $(M, g_{ab})$ be Hermitian,
with complex structure $J^{a}{}_{b}$ and unprimed Weyl spinor $\Psi_{ABCD}$ (see Section \ref{sec:preliminaries} for notation).
Then~$J^{a}{}_{b}$ can be represented by a spinor $o^A$ as in \eqref{eq:J} and \eqref{iota} below, with respect to which $\Psi_0=\Psi_{ABCD}o^{A}o^{B}o^{C}o^{D}=0$. This follows from \eqref{SFR} below and its integrability condition, see, for example, the discussion around \cite[equation (12.3)]{Huggett_Tod_1994}.
Let $g(s)_{ab}$ be a 1-parameter family of
metrics on $M$, with $g(0)_{ab} = g_{ab}$.
The linearized $\Psi_0$ is given by
\begin{align} \label{eq:LinPsi0}
\vartheta \Psi_0 = \frac{{\rm d}}{{\rm d}s} \Psi_0[g(s)_{ab}] |_{s=0},
\end{align}
where $\vartheta$ denotes the variation operator introduced in \cite{backdahl2016formalism} (adapted to Riemannian signature) which we shall use in this paper to treat perturbations, see Section \ref{sec:perturbations}. Since the background~$\Psi_0$ vanishes, $\vartheta \Psi_0$ is a gauge invariant quantity.
If the Hermitian manifold $(M,g_{ab})$ is Ricci-flat or Einstein, then for a perturbation satisfying \eqref{eq:DRic}, we have that $\vartheta \Psi_0$ satisfies the Teukolsky equation, see \eqref{TeukEq} below.
The Teukolsky equation in the Riemannian case is the analog of the Teukolsky equation which governs perturbations of the Kerr black hole and other Lorentzian Petrov type D Einstein metrics.
In the Lorentzian case, mode stability means that there are no solutions of the Teukolsky equation with frequency in the upper half plane and satisfying a~condition of no incoming radiation, see \cite{Andersson2016, Whiting}.
In the Riemannian case, we have the following analog of the notion of mode stability.

\begin{Definition} \label{Def:modestability}
Let $(M, g_{ab})$ be Hermitian.
If $(M,g_{ab})$ is ALF Ricci-flat, then we say that mode stability holds for $(M, g_{ab})$ if for any ALF vacuum perturbation we have $\vartheta \Psi_0 = 0$.
If $(M,g_{ab})$ is Einstein and compact, then we say that mode stability holds for $(M, g_{ab})$ if for any Einstein perturbation we have $\vartheta \Psi_0 = 0$.
\end{Definition}

\begin{Remark}
In Lorentzian signature, mode stability for non-vacuum spacetimes such as the Kerr--Newman solution to the Einstein--Maxwell system, or the Kerr--de Sitter black hole in the Einstein case, remains open. In Riemannian signature, Einstein--Maxwell instantons are also of interest, and the problem of their mode stability is also worth studying. The instanton version of the Kerr--de Sitter black hole is a compact Einstein space found by Page \cite{Page}, whose mode stability is proven in Theorem \ref{thm:modestabCompact} below.
\end{Remark}

It was recently shown by Nilsson that mode stability holds for the Petrov type D Euclidean Kerr and Taub-bolt families of instantons, see \cite{2024CQGra..41h5004N}. In this paper, we give a new proof of this result which is valid for all Hermitian ALF instantons, that is including the Chen--Teo case.
Our argument also applies for compact Einstein--Hermitian 4-manifolds.

\begin{Theorem}
\label{thm:modestab}
Let $(M,g_{ab})$ be a Hermitian non-K\"ahler ALF instanton. Then mode stability holds for $(M,g_{ab})$.
\end{Theorem}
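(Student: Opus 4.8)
The plan is to show that, in Riemannian signature and on a Hermitian Einstein background, the Teukolsky operator acting on $\vartheta\Psi_0$ is a positive operator, so that its kernel in the admissible class of perturbations is trivial; this is precisely the assertion of mode stability in Definition~\ref{Def:modestability}.

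First I would recall the setup: since the background $\Psi_0$ vanishes, $\psi := \vartheta\Psi_0$ in \eqref{eq:LinPsi0} is gauge invariant, and by the discussion preceding Definition~\ref{Def:modestability} it satisfies the Teukolsky equation \eqref{TeukEq}. Because $g_{ab}$ is Riemannian this equation is \emph{elliptic}; it is a second-order equation for a spin- and boost-weighted scalar built from the Weyl spinor and the Hermitian spinor $o^A$. Mode stability is then the statement that \eqref{TeukEq} has no nonzero solution which is ALF-decaying (respectively, no nonzero smooth solution on compact $M$). So it suffices to establish a coercivity estimate of the form $\langle\psi,\mathcal{O}\psi\rangle_{L^2(M)}>0$ for every nonzero admissible $\psi$, where $\mathcal{O}$ denotes the suitably normalized Teukolsky operator.

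To obtain this I would use that a Hermitian metric is algebraically special, so that $o^A$ is a repeated principal spinor of $\Psi_{ABCD}$; in the GHP frame attached to $o^A$ the optical scalars $\kappa,\sigma$ vanish and, since $(M,g_{ab})$ is Einstein and non-K\"ahler, the relevant Weyl spinor reduces to a single component which is nowhere zero (the half-flat cases, where this fails, being excluded as in Remark~\ref{remark:halfflat}). Under these conditions the Teukolsky operator can be rewritten, after integrating by parts against $\bar\psi$ with an appropriate curvature weight — equivalently, after the conformal rescaling $\hat g_{ab}=f^{2}g_{ab}$ to the K\"ahler metric to which a Hermitian Einstein $4$-manifold is conformal — as a sum of squares of first-order GHP operators plus a manifestly non-negative curvature term: schematically $\langle\psi,\mathcal{O}\psi\rangle=\|T_1\psi\|^{2}+\|T_2\psi\|^{2}+\langle\psi,V\psi\rangle+(\text{boundary})$ with $V\ge0$. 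This is the Riemannian, Hermitian counterpart of Whiting's positivity identity \cite{Whiting} for the Lorentzian Teukolsky equation, but here it is a purely elliptic integration-by-parts computation rather than one requiring a frequency transform. The boundary term is controlled by the ALF fall-off: from $\nabla^{k}h_{ab}=O(r^{-1-k})$ one gets $\psi=O(r^{-3})$ and $T_i\psi=O(r^{-4})$, while the hypersurfaces $\{r=\text{const}\}$ have area growing only like $r^{2}$ in the ALF setting, so the flux integral is $O(r^{-5})\to0$; in the compact case there is no boundary term. Hence $T_1\psi=T_2\psi=0$ and $\langle\psi,V\psi\rangle=0$, and the overdetermined elliptic system $T_1\psi=T_2\psi=0$ — whose solutions are ``holomorphic-type'' sections of a negative power of the canonical bundle on the conformal K\"ahler side — admits only $\psi=0$ with the required decay. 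This forces $\vartheta\Psi_0=0$, which is mode stability.

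I expect the crux to be the construction of the positivity identity with a provably non-negative curvature term in the \emph{Ricci-flat} ALF case: there is no cosmological constant to supply positivity, so one must instead exploit the precise algebraic structure of the self-dual Weyl curvature of a Hermitian metric and the conformal rescaling to a K\"ahler metric, and then track that conformal factor and the induced GHP weights consistently through the boundary analysis at infinity. The role of the non-K\"ahler hypothesis is exactly to guarantee that the relevant Weyl component does not vanish, which is what makes both the curvature term $V$ effective and the kernel of $(T_1,T_2)$ trivial; in the K\"ahler (half-flat) case genuine moduli appear and the statement fails, consistent with the discussion of negative modes.
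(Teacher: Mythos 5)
Your overall strategy is the same as the paper's: derive the Teukolsky equation for the gauge-invariant $\vartheta\Psi_0$, pair it with its conjugate against a curvature weight coming from the conformal K\"ahler structure, integrate by parts, kill the boundary flux using the ALF fall-off, and conclude from positivity of the resulting quadratic form. But the step you yourself single out as the crux --- non-negativity of the zero-order term in the Ricci-flat ALF case --- is asserted, not proved, and ``the relevant Weyl component is nowhere zero'' does not deliver it: $\Psi_2$ is real and nonvanishing, hence of constant sign, but if that sign were negative the identity would read $\int \Psi_{2}^{-4/3}\bigl(|\mathcal{C}\chi|^{2}+18\Psi_{2}|\chi|^{2}\bigr)\,{\rm d}\mu=0$ with an indefinite integrand (and the weight $\Psi_2^{-4/3}$ itself would fail to be a positive density), so nothing follows. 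The paper supplies exactly this missing ingredient as Lemma~\ref{lem:Psi2}: by Derdzi\'nski's results the rescaling $\hat g_{ab}=\varphi^2 g_{ab}$ with $\varphi=24^{1/6}\bigl|\mathcal{W}^+\bigr|^{1/3}$ is extremal K\"ahler with $\hat S\varphi^3=6\Delta\varphi$ and $\mathcal{W}^+$ has no zeros; since $\varphi>0$ and $\varphi\to0$ at the ALF end, $\varphi$ attains an interior maximum where $\Delta\varphi\geq0$, forcing $\hat S>0$ there and hence everywhere, and then $\Psi_2=\varphi^2\hat S/12>0$. Without this maximum-type argument (or an explicit case-by-case check via the classification, including the lengthy Chen--Teo computation) your coercivity claim is unsupported precisely where you correctly locate the difficulty.

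Two smaller points. First, once $\Psi_2>0$ is established, no splitting into first-order operators $T_1,T_2$ and no unique-continuation or ``holomorphic sections'' argument for their joint kernel is needed: the paper's identity \eqref{eq:MainIdentity} has integrand $\Psi_2^{-4/3}\bigl(|\mathcal{C}\chi|^2+18\Psi_2|\chi|^2\bigr)$, which is pointwise positive definite in $\chi$, so $\chi=0$ is immediate; your weaker ``$V\geq0$ plus overdetermined system'' route introduces an additional unproved step. Second, your boundary bookkeeping drops the curvature weight you yourself introduced: with $\Psi_2^{-4/3}=O\bigl(r^{4}\bigr)$, $\chi=O\bigl(r^{-3}\bigr)$ and $\mathcal{C}\chi=O\bigl(r^{-4}\bigr)$, the flux density is $O\bigl(r^{-3}\bigr)$ and the boundary term is $O\bigl(r^{-1}\bigr)$, not $O\bigl(r^{-5}\bigr)$; it still tends to zero, but the growing weight must be tracked, exactly the kind of consistency you flagged but did not carry out.
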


\begin{Theorem} \label{thm:modestabCompact}
Let $(M,g_{ab})$ be a compact Einstein--Hermitian $4$-manifold with positive cosmological constant. Then mode stability holds for $(M,g_{ab})$.
\end{Theorem}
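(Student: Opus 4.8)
The plan is to show that, on a compact Einstein--Hermitian $4$-manifold with $S(g)>0$, the Teukolsky operator acting on $\vartheta\Psi_0$ is a \emph{positive definite} operator, so that the only solution of the Teukolsky equation \eqref{TeukEq} is $\vartheta\Psi_0=0$; by Definition~\ref{Def:modestability} this is exactly mode stability. As a preliminary step I would record that, since $g_{ab}$ is Einstein and $h_{ab}$ satisfies \eqref{eq:DRic}, the quantity $\vartheta\Psi_0$ of \eqref{eq:LinPsi0} satisfies \eqref{TeukEq}, and that no gauge fixing enters because $\Psi_0\equiv 0$ on the background makes $\vartheta\Psi_0$ gauge invariant.

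The core of the argument is a Weitzenb\"ock-type identity for the Teukolsky operator. Working in the GHP formalism with the dyad $(o^A,\iota^A)$ attached to the complex structure through \eqref{eq:J}--\eqref{iota}, the extreme Weyl scalar $\vartheta\Psi_0$ is a section of a weighted line bundle built from $o^A$, and the structural input I would exploit is that a Hermitian Einstein $4$-manifold with $S(g)\neq 0$ is conformal to a K\"ahler metric: $|\Psi_2|^{2/3}g_{ab}$ is K\"ahler, so in particular $\Psi_2$ is nowhere vanishing, and the congruence $o^A$ is shear-free and geodesic by the Riemannian Goldberg--Sachs theorem. After the corresponding rescaling of the dyad the congruence scalars simplify, and the Teukolsky operator can be put in the form $\mathcal{D}^{*}\mathcal{D}+V$, where $\mathcal{D}$ is a first-order operator adapted to $o^A$ and $V$ is a zeroth-order term which one computes, using the Hermitian and Einstein conditions, to be a strictly positive multiple of $S(g)$, together with an additional manifestly nonnegative contribution in the algebraically general case. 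Geometrically, this identifies $\vartheta\Psi_0$ with a $\bar\partial$-harmonic section of a negative power of the anticanonical bundle of the associated K\"ahler surface, in accordance with the fact that compact Einstein--Hermitian $4$-manifolds with $S>0$ are del Pezzo.

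Granting this identity the conclusion is immediate. Since $M$ is compact there are no boundary terms, so pairing \eqref{TeukEq} with $\overline{\vartheta\Psi_0}$ and integrating gives
\begin{align*}
0 &= \int_M \overline{\vartheta\Psi_0}\,\bigl(\mathcal{D}^{*}\mathcal{D}+V\bigr)\vartheta\Psi_0\,\mathrm{d}V_g \\
 &= \int_M \bigl|\mathcal{D}\,\vartheta\Psi_0\bigr|^{2}\,\mathrm{d}V_g+\int_M V\,\bigl|\vartheta\Psi_0\bigr|^{2}\,\mathrm{d}V_g ,
\end{align*}
and, $V$ being positive, both integrands must vanish, so $\vartheta\Psi_0=0$. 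Equivalently, a negative line bundle over a compact complex surface has no nonzero holomorphic --- hence no Hodge-harmonic --- section.

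The step I expect to be the main obstacle is the verification that the zeroth-order term $V$ is genuinely nonnegative. A priori $\Psi_2$ is complex-valued and the Weyl-curvature contributions to the Teukolsky potential are sign-indefinite, so one has to use the Hermitian structure in an essential way --- the conformal-to-K\"ahler rescaling and the ensuing algebraic relations among the curvature, the spinor $o^A$, and the spin coefficients --- to see that the dangerous terms either cancel or reassemble into $c\,S(g)$ with $c>0$. Once this is established, compactness removes the boundary terms for free, which is why the compact case is in fact easier than the ALF case of Theorem~\ref{thm:modestab}, where the same energy identity must be supplemented by the ALF decay of the perturbation in order to control the contributions from infinity.
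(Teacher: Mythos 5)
Your overall architecture --- derive the Teukolsky equation for $\vartheta\Psi_0$, pair it with its conjugate, integrate over the closed manifold, and conclude from positivity of a zeroth-order term --- is exactly the skeleton of the paper's proof, but the two ingredients you defer are precisely the mathematical content, and one of them is misstated. The potential in the Teukolsky operator ${\rm L}=g^{ab}\mathcal{C}_a\mathcal{C}_b-18\Psi_2$ is \emph{not} a pointwise positive multiple of $S(g)$: after the integration by parts the relevant term is $18\Psi_2^{-1/3}|\chi|^2$, and the positivity of $\Psi_2$ is a separate nontrivial fact (the compact analogue of Lemma~\ref{lem:Psi2}). Establishing it requires Derdzi\'nski's theorem that $\mathcal{W}^+$ is nowhere vanishing and that $\hat g_{ab}=\varphi^2 g_{ab}$ with $\varphi$ as in \eqref{CF} is extremal K\"ahler, the identity \eqref{Psi2Scal} relating $\Psi_2$ to $\hat S$, the conformal transformation law $\varphi^2\hat S=S+6\varphi^{-1}\Delta\varphi$, and an evaluation at a maximum point of $\varphi$ combined with the fact that $\hat S$ cannot change sign; only then does $S=4\lambda>0$ give $\Psi_2>0$. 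There is no identity of the form $V=c\,S(g)$ with $c>0$ constant, and you yourself flag this verification as the ``main obstacle,'' i.e., it is missing from the argument.

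The second gap is that your decomposition $\mathcal{D}^*\mathcal{D}+V$ is only meaningful once you specify the inner product with respect to which $\mathcal{D}^*$ is an adjoint, and this is where the conformally K\"ahler structure enters a second time. The connection $\mathcal{C}_a$ contains the Lee form $f_a$ and the one-form $P_a$, so $\mathcal{C}_a u^a$ is \emph{not} a divergence with respect to ${\rm d}\mu_g$; pairing ${\rm L}[\chi]$ with $\bar\chi$ against the unweighted measure leaves first-order terms that do not integrate away, and the naive energy identity fails. The paper's resolution (Lemma~\ref{lem:divergence}) is to multiply by $\Psi_2^{-4/3}\bar\chi$: since by \eqref{phiEinstein} the factor $\phi\propto\Psi_2^{1/3}$ satisfies $\mathcal{C}_a\phi=0$, identity \eqref{identity} converts $\mathcal{C}_a(\bar\chi\mathcal{C}^a\chi)$ into $\Psi_2^{4/3}\nabla_a\bigl(\Psi_2^{-4/3}\bar\chi\mathcal{C}^a\chi\bigr)$, which yields the weighted identity \eqref{eq:MainIdentity}; without this weight (or an equivalent rescaling of $\chi$) the formal self-adjointness you assume does not hold. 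Your final step --- no boundary term on a closed manifold, hence $\chi=0$ once the integrand is nonnegative --- is fine and matches the paper, and your remark identifying $\vartheta\Psi_0$ with a holomorphic section of a negative line bundle is a suggestive heuristic but is neither justified nor used. In short, the approach is the right one, but the two load-bearing lemmas (the weighted divergence identity and $\Psi_2>0$) are asserted rather than proved.
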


\begin{Remark}\label{remark:halfflat}
Recall that a hyperk\"ahler instanton is half-flat. Due to the Hitchin--Thorpe inequality for closed four-manifolds $M$, a Ricci-flat manifold $(M,g_{ab})$ is half-flat if and only if $\tau(M)=\frac{2}{3}\chi(M)$, see, for example, \cite[Section~13.8]{MR2371700}. Since $\tau(M)$ and $\chi(M)$ are topological invariants, any Ricci-flat metric on a closed manifold admitting a half-flat metric will satisfy the same equality and will thus be half-flat.

For ALF manifolds $(M,g_{ab})$, the Hitchin--Thorpe inequality has an additional term giving the contribution from the ALF end, see \cite{Chen-Li21,Dai-Wei07}. Again, equality holds if and only if~$(M,g_{ab})$ is hyperk\"ahler. Since neither $\tau(M)$, $\chi(M)$ nor the contribution at infinity depends on the particular ALF metric, any other Ricci-flat metric will again satisfy the same equality and will thus be half-flat.

This gives the analog of mode stability for hyperk\"ahler instantons. Therefore, we shall consider mode stability only for Hermitian, non-K\"ahler manifolds.
\end{Remark}

Mode stability is a step towards proving the following conjecture.
\begin{Conjecture} \label{conj:integ}
Let $(M, g_{ab})$ be a Hermitian non-K\"ahler ALF instanton. Then $(M,g_{ab})$ is integrable.
\end{Conjecture}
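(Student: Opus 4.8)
The plan is to realize integrability as smoothness of the local moduli space of ALF Ricci-flat metrics, feeding in three ingredients: the classification of Hermitian ALF instantons \cite{Biquard:Gauduchon, li2023classification}, the local rigidity of Biquard--Gauduchon--LeBrun \cite{biquard2023gravitational}, and the mode stability of Theorem~\ref{thm:modestab}. Following the linearization-stability framework of Fischer--Marsden \cite{fischer_marsden_1975} (see also \cite[Section~12.E]{MR2371700}), I would first fix a gauge, working with ALF vacuum perturbations $h_{ab}$ that are transverse-traceless modulo Lie derivatives of decaying vector fields, so that the formal tangent space to the moduli at $g_{ab}$ is the space $\mathcal{T}$ of gauge classes of solutions of the linearized vacuum equation \eqref{eq:DRic}. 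Integrability is then equivalent to the statement that every $h_{ab}\in\mathcal{T}$ is tangent to a genuine curve of ALF Ricci-flat metrics, i.e.\ that all higher-order Kuranishi obstructions vanish.

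Next I would identify the moduli space itself. By local rigidity \cite{biquard2023gravitational}, every ALF Ricci-flat metric in a neighborhood of $g_{ab}$ is again Hermitian, and by the classification \cite{Biquard:Gauduchon, li2023classification} it must lie in one of the finite-parameter families (Euclidean Kerr, Taub-bolt, or Chen--Teo \cite{2011PhLB..703..359C}). Each of these is a smooth finite-dimensional family, so the local moduli space $\mathcal{M}$ of ALF Ricci-flat metrics near $g_{ab}$ is a smooth manifold of some dimension $d$ equal to the number of essential parameters of the relevant family. Integrability would follow if one shows that the infinitesimal deformation space $\mathcal{T}$ has dimension exactly $d$, with every class arising from variation of the parameters: the inclusion $T_{g}\mathcal{M}\hookrightarrow\mathcal{T}$ is automatic, and it is surjectivity---equivalently $\dim\mathcal{T}=d$---that must be established. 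This is where mode stability enters. Since the background $\Psi_0$ vanishes, the gauge-invariant quantity $\vartheta\Psi_0$ of \eqref{eq:LinPsi0} measures the first-order failure of the perturbed metric to remain algebraically special. Theorem~\ref{thm:modestab} gives $\vartheta\Psi_0=0$ for every $h_{ab}\in\mathcal{T}$, so by the Goldberg--Sachs equivalence every infinitesimal deformation preserves the Hermitian condition to first order. The plan is to upgrade this to the statement that $h_{ab}$ is tangent to the explicit Hermitian family, thereby pinning $\dim\mathcal{T}=d$ and ruling out spurious, necessarily obstructed, deformations.

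The main obstacle is precisely bridging the first-order statement $\vartheta\Psi_0=0$ and genuine integrability, which is a statement about second and higher order. In Kuranishi terms the leading obstruction to integrating $h_{ab}$ is the pairing of a deformation $k_{ab}\in\mathcal{T}$ against the quadratic Einstein term $Q(h,h)$ in the expansion of \eqref{eq:EinsteinOp}; mode stability controls the linearized Teukolsky operator but does not obviously annihilate this pairing. Two complications sharpen the difficulty. First, one must carry out an ALF analog of Koiso's deformation computation, with $L^2$-cohomology and boundary contributions at the ALF end controlling $\dim\mathcal{T}$; the decay hypothesis $\nabla^k h_{ab}=O\bigl(r^{-1-k}\bigr)$ is essential here. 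Second, the Euclidean Kerr and Taub-bolt instantons carry Killing fields, and Moncrief's results \cite{Moncrief1, Moncrief2} show that in the Lorentzian compact-Cauchy setting Killing symmetries obstruct linearization stability; one must verify that the ALF decay conditions together with the positivity of the Teukolsky operator prevent the analogous obstruction here. It is this gap between the first-order rigidity furnished by mode stability and the full nonlinear unobstructedness that keeps the statement a conjecture; closing it is the essential content of any proof.
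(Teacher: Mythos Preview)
The statement you are addressing is a \emph{conjecture}; the paper does not prove it. There is therefore no ``paper's own proof'' to compare against. What the paper does offer is the Remark immediately following Conjecture~\ref{conj:integ}, which sketches the authors' intended strategy: use the classification to see that the moduli space is smooth and 2-dimensional, reduce integrability to \emph{infinitesimal rigidity} (every ALF vacuum perturbation is, modulo gauge, a variation of the moduli parameters), and attack the latter via the compatibility complex of \cite{AABKW}, by analogy with the Lorentzian Kerr result of \cite{2022arXiv220712952A}.

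Your plan coincides with the paper's at the structural level: both reduce the problem to showing $\dim\mathcal{T}=d$, invoking classification and local rigidity to identify the moduli space as a smooth finite-dimensional family. Where you diverge is in the proposed mechanism for pinning down $\mathcal{T}$. You suggest using $\vartheta\Psi_0=0$ plus a Goldberg--Sachs argument to force every infinitesimal deformation to be ``Hermitian to first order,'' and then somehow conclude it is tangent to the explicit family. The paper instead points toward the compatibility complex, i.e.\ constructing a complete set of gauge-invariant quantities whose vanishing characterizes pure-gauge perturbations up to moduli, and showing that the Teukolsky system captures all of them. The latter is more concrete: it replaces the vague step ``upgrade first-order Hermitian preservation to tangency with the family'' by an explicit algebraic--analytic machine that has already been executed for Lorentzian Kerr.

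You correctly diagnose the genuine gap: mode stability gives $\vartheta\Psi_0=0$, which is a single scalar condition, whereas infinitesimal rigidity requires controlling the full ten-component $h_{ab}$ modulo gauge. Vanishing of $\vartheta\Psi_0$ does not by itself imply $h_{ab}$ is pure gauge plus moduli; one needs either the remaining gauge-invariants (the compatibility-complex route) or a separate argument that ``first-order Hermitian'' deformations of an instanton in a given family stay in that family, which is essentially the content of the conjecture restated. Your closing paragraph is honest about this, so your write-up is best read as a well-informed discussion of the obstacles rather than a proof proposal in the strict sense.
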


\begin{Remark}
It follows from the classification of Hermitian non-K\"ahler ALF instantons that the corresponding moduli spaces are smooth and 2-dimensional. Therefore, Conjecture~\ref{conj:integ}~can be restated in terms of infinitesimal rigidity, that is the statement that an ALF vacuum perturbation is, modulo gauge, a perturbation with respect to the moduli parameters, see \cite{2024CQGra..41h5004N}.
The conjecture can be addressed by analyzing the compatibility complex along the lines in \cite{AABKW}.
In the Lorentzian case, the corresponding result for Kerr has been shown to hold in \cite{2022arXiv220712952A}.
\end{Remark}

\section{Preliminaries and notation} \label{sec:preliminaries}

In this section, $(M,g_{ab})$ denotes a four-dimensional orientable Riemannian manifold with Levi-Civita connection $\nabla_{a}$.

\subsection{Spinors and complex structures} \label{sec:spinors}
We shall use the formalism of 2-spinors as developed by Penrose and Rindler \cite{PR:I, PR:II}, with the exception that we adapt the framework to Riemannian signature following Woodhouse \cite{Woodhouse_1985} (see also \cite{1994GReGr..26..979G}).
This can be done by first using that the constructions in \cite{PR:I, PR:II} formally apply in a complex space as explained in \cite[Section 6.9]{PR:II}, and then noticing that one can specialize to Riemannian signature by equipping the spin spaces with the Riemannian spinor conjugation~$\dagger$ defined in \cite[equation~(2.5)]{Woodhouse_1985}.

We shall in fact only use spinors up to scale, so the existence of a global spin structure is not assumed. Note that the Taub-bolt and Chen--Teo instantons do not admit spin structures~\mbox{\cite{2011PhLB..703..359C, 1978PhLB...78..249P}}.

Abstract spinor indices are denoted by $A,B,\dots $ and $A',B',\dots $, and can be raised and lowered with the symplectic forms $\epsilon_{AB}$, $\epsilon_{A'B'}$ and their inverses. Tensor indices correspond to pairs of primed and unprimed spinor indices, $a=AA'$, $b=BB'$, etc. For example, the metric is~${g_{ab}=\epsilon_{AB}\epsilon_{A'B'}}$, and the Weyl tensor is
\begin{align*}
 W_{abcd} = \Psi_{ABCD}\epsilon_{A'B'}\epsilon_{C'D'} + \tilde\Psi_{A'B'C'D'}\epsilon_{AB}\epsilon_{CD},
\end{align*}
where $\tilde\Psi_{A'B'C'D'}$ and $\Psi_{ABCD}$ are the (totally symmetric) self-dual and anti-self-dual Weyl curvature spinors, respectively.

We shall assume that $(M,g_{ab})$ has an almost-complex structure $J^{a}{}_{b}$ compatible with $g_{ab}$. From in \cite[Chapter~IV, Proposition~9.8]{lawson:michelsohn},
$J^{a}{}_{b}$ can be represented by a projective spinor, say~$o^{A}$. The explicit representation is
\begin{align}\label{eq:J}
 J^{AA'}{}_{BB'} = {\rm i} (o^{A}\iota_{B}+\iota^{A}o_{B})\delta^{A'}_{B'},
\end{align}
see \cite[equation~(4.7)]{Woodhouse_1985}, where $\iota^{A}$ is the complex conjugate of $o^{A}$:
\begin{align}\label{iota}
 \iota^A = o^{\dagger A}.
\end{align}
In \eqref{eq:J}, we chose the normalization $o_{A}\iota^{A}=1$. The pair $\bigl(o^A,\iota^A\bigr)$ is called spin dyad. The~components of $\Psi_{ABCD}$ in this dyad are
\begin{align}\label{weylspinorcomp}
\Psi_0 = \Psi_{ABCD} o^A o^B o^C o^D, \qquad \Psi_1 = \Psi_{ABCD} o^A o^B o^C \iota^D, \qquad \Psi_2 = \Psi_{ABCD} o^A o^B \iota^C \iota^D,
\end{align}
together with \smash{$\Psi_{3}=-\overline{\Psi_1}$} and \smash{$\Psi_{4}=\overline{\Psi_0}$}. Note that $\Psi_2$ is real.

\subsection{Conformally invariant GHP connections} \label{sec:CGHP}

An almost-complex structure is invariant under two kinds of transformations: a rescaling of the spinors in \eqref{eq:J} of the form
\begin{align}\label{GHPscaling}
o^A \to{}& \lambda o^A, \qquad \iota^A \to \lambda^{-1} \iota^A,
\end{align}
where $\lambda\colon M\to U(1)$, and conformal transformations
\begin{align}\label{conformaltmetric}
g_{ab} \to \hat g_{ab} ={}& \Omega^2 g_{ab},
\end{align}
where $\Omega$ is a positive function.

We shall use a framework that is covariant under both of the above transformations. This is closely related to the conformally invariant Geroch--Held--Penrose (GHP) formalism given in~\mbox{\cite[Section 5.6]{PR:I}}, but with three main differences arising from the following requirements: it should apply to arbitrary spinor/tensor fields (the framework in \cite[Section 5.6]{PR:I} applies only to scalar fields), it should be independent of a choice of primed spin dyad, it should be adapted to Riemann signature. A framework satisfying these requirements can be found in \cite{araneda2021conformal}.

Under conformal transformations \eqref{conformaltmetric}, the spin dyad in \eqref{eq:J} transforms as
\begin{align}\label{conformaloiota}
\hat{o}^{A}=\Omega^{-1/2}o^{A}, \qquad \hat\iota^A=\Omega^{-1/2}\iota^A.
\end{align}
A metric-dependent tensor/spinor field $\varphi^{\mathcal{A}}$ which transforms with respect to \eqref{GHPscaling} and \eqref{conformaltmetric} according to
\begin{align} \label{eq:proper}
\varphi^{\mathcal{A}} \to{}& \lambda^p \Omega^w \varphi^{\mathcal{A}}
\end{align}
is said to have conformal weight $w$ and GHP weight $p$. Here, $\mathcal{A}$ represents an arbitrary collection of tensor/spinor indices. We shall refer to fields satisfying \eqref{eq:proper} as properly weighted fields with weights $(w,p)$.
For example, $o^{A}$ and $\iota^{A}$ have weights $\bigl(-\frac{1}{2},1\bigr)$ and $\bigl(-\frac{1}{2},-1\bigr)$, respectively, while~$o_{A}$ and~$\iota_{A}$ have weights $\bigl(\frac{1}{2},1\bigr)$ and $\bigl(\frac{1}{2},-1\bigr)$. The components $\Psi_0$, $\Psi_1$, $\Psi_2$ of the Weyl~spinor (equation~\eqref{weylspinorcomp}) have conformal weight $w=-2$ and GHP weights $4, 2, 0$, respectively.

Let $\chi$ be a scalar field, $\varphi_{A}$ a spinor field, and $v_{a}$ a covector field, all of them with conformal weight $w$ and GHP weight $p$.
We define the covariant derivative $\mathcal{C}_{a}$ by
\begin{align*} 
&\mathcal{C}_{a}\chi= \nabla_{a}\chi + wf_{a}\chi + p P_{a}\chi, \\
&\mathcal{C}_{AA'}\varphi_{B}= \nabla_{AA'}\varphi_{B} - f_{A'B}\varphi_{A} + wf_{AA'}\varphi_{B} + p P_{AA'}\varphi_{B}, \\
&\mathcal{C}_{a}v_b= \nabla_{a}v_{b} + wf_{a}v_{b} + p P_{a}v_{b} - Q_{ab}{}^{c}v_{c},
\end{align*}
where
\begin{align}
&f_{a} = -\frac{1}{2}J^{c}{}_{b}\nabla_{c}J^{b}{}_{a}, \label{LeeForm} \\
&P_{a} = \omega_{a} -\frac{1}{2} {\rm i} J^{b}{}_{a}f_{b}, \label{1formP} \\
&Q_{ab}{}^{c} = f_{a}\delta^{c}_{b} + f_{b}\delta^{c}_{a} - f^{c}g_{ab}. \nonumber
\end{align}
Here $\omega_{a}=\iota_{B}\nabla_{a}o^{B}$ is the GHP connection 1-form, and $f_{a}$ is the Lee form.
The action of $\mathcal{C}_{a}$ on fields with an arbitrary index structure is defined in the standard way.

\begin{Remark}
We have the following facts, which generalize similar results for the standard GHP formalism and its corresponding covariant derivative $\Theta_{a}$.
\begin{enumerate}\itemsep=0pt
\item
If $\chi$ has weights $(w,p)$, then $\bar\chi$ has weights $(w,-p)$. This follows from \eqref{iota}, \eqref{GHPscaling} and~\eqref{conformaloiota}.
This is different from Lorentzian GHP, since in that case there is also a ``$q$-weight'' associated to a primed spin dyad, and complex conjugation interchanges $p$ and $q$.
\item
$\mathcal{C}_{a}$ is real (it commutes with complex conjugation) and metric ($\mathcal{C}_{a}g_{bc}=0$).
Reality can be seen by making use of the previous item together with the fact that the 1-form $P_{a}$ in \eqref{1formP} is purely imaginary, $\bar{P}_{a}=-P_{a}$.
\item
$\mathcal{C}_{a}$ is covariant under GHP and conformal transformations \cite[Section 2.3]{araneda2021conformal}: if \eqref{eq:proper} holds, then
\begin{align*}
 \mathcal{C}_{a}\varphi^{\mathcal{A}} \to \lambda^p \Omega^w \mathcal{C}_{a}\varphi^{\mathcal{A}}.
\end{align*}
This generalizes the transformation rule of the standard GHP derivative, that is $\Theta_{a}\varphi^{\mathcal{A}} \to \lambda^p\Theta_{a}\varphi^{\mathcal{A}}$ under GHP scalings \eqref{GHPscaling}.
\item
$(M,g_{ab})$ is Hermitian if and only if $\mathcal{C}_{a}o^{B}=0$. This follows since
\begin{align*}
\mathcal{C}_{AA'}o^{B} = \sigma_{A'}\iota_{A}\iota^{B}, \qquad
\sigma_{A'}=o^{A}o^{B}\nabla_{AA'}o_{B},
\end{align*}
and, using GHP notation \cite{PR:I},
\begin{align}\label{SFR}
o^{A}o^{B}\nabla_{AA'}o_{B} = 0 \qquad \Leftrightarrow \qquad \kappa=\sigma=0,
\end{align}
which is the condition for the existence of a shear-free null geodesic congruence \cite[equation~(7.3.1)]{PR:II}, or, equivalently, an integrable complex structure \cite[Section 2.4]{araneda2021conformal}.
This generalizes the characterization of K\"ahler manifolds in terms of the GHP connection: a Riemannian 4-manifold is K\"ahler if and only if $\Theta_{a}o^{B}=0$, see \cite[Chapter~IV, Proposition~9.8]{lawson:michelsohn}.
\item
$(M,g_{ab})$ is conformally K\"ahler iff it is Hermitian and $f_{a}=\nabla_{a}\log\phi$ for some scalar field $\phi$ (with $w=-1$, $p=0$). This field satisfies
\begin{align*}
\mathcal{C}_{a}\phi=0.
\end{align*}
For example, in the Einstein--Hermitian case we have that
\cite[Remark 5.1]{araneda2021conformal}
\begin{align} \label{phiEinstein}
\phi \propto \Psi^{1/3}_{2}.
\end{align}
\item
If $(M,g_{ab})$ is conformally K\"ahler, and $u_{a}$ has weights $w$, $p=0$, then
\begin{align} \label{identity}
\mathcal{C}_{a}u^{a} = \phi^{-(w+2)}\nabla_{a}\bigl(\phi^{w+2}u^{a}\bigr).
\end{align}
\end{enumerate}
\end{Remark}

\subsection{Perturbations and the Teukolsky equation}
\label{sec:perturbations}

Here we introduce some notation for gravitational perturbations and prove an identity that will be needed in Section \ref{sec:modestability}.

The gravitational perturbations we consider are of two types: either compactly supported in the compact Einstein case, or they satisfy certain fall-off conditions in the ALF Ricci-flat case. For the latter, we use the following notation, which is taken from \cite{2023arXiv230614567A}.
\begin{Definition}[{\cite[Definition~2.6]{2023arXiv230614567A}}]\label{def:Onotation}
Let $(M,g_{ab})$ be an ALF manifold as defined in \cite[Definition~2.1]{2023arXiv230614567A}. Let $t$ and $s$ be any two tensor fields on $(M,g_{ab})$. We write
\begin{align*}
 t = O(r^{\alpha}), \qquad s = O^{*}(r^{\alpha})
\end{align*}
if there is a constant $C$ such that $|t|\leq C r^{\alpha}$ for $r\geq A$, and $\bigl|\nabla^{k}s\bigr|=O\bigl(r^{\alpha-k}\bigr)$ for all non-negative integers $k$. Here, \smash{$|t|^2=t_{a\dots d}\overline{t}{}^{a\dots d}$}.
\end{Definition}

We treat variations of spinor and tensor fields following the approach introduced in \cite{backdahl2016formalism},
which can be adapted to Riemannian signature.
In particular, given a symmetric 2-tensor $h_{ab}$ on $(M, g_{ab})$, viewed as a linear perturbation of the metric, the corresponding perturbations of spinor and tensor fields are given by the variation operator $\vartheta$. For example, the variation of the unprimed Weyl spinor is $\vartheta \Psi_{ABCD}$, and the variations of the scalars $\Psi_i$, $i=0,1,2$ are given by~$\vartheta \Psi_i$.
The formula for $\vartheta \Psi_{ABCD}$ is
\begin{align}\label{linWeylSpinor}
\vartheta\Psi_{ABCD}
= \frac{1}{2}\nabla_{(A}^{A'}\nabla_{B}^{B'}h_{CD)A'B'} + \frac{1}{4}g^{ef}h_{ef}\Psi_{ABCD},
\end{align}
which coincides with \cite[equation~(5.7.15)]{PR:I}.

Let $h_{ab}$ be an arbitrary metric perturbation. Recall that we defined the Einstein operator in~\eqref{eq:EinsteinOp}. We denote its linearization by $\vartheta{E}_{ab}$, and the linearized Ricci tensor and scalar curvature by $\vartheta R_{ab}$ and $\vartheta S$. From \cite[Theorem~1.174]{MR2371700}, we have
\begin{align}
\vartheta E_{ab} &{}= \vartheta R_{ab} - \frac{1}{4}\vartheta S g_{ab} - \frac{1}{4}S(g)h_{ab} \nonumber\\
&{}= \frac{1}{2}\Delta h_{ab} - \frac{1}{2}\nabla_{a}\nabla_{b}\bigl(g^{cd}h_{cd}\bigr) + \frac{1}{2}\nabla^{c}\nabla_{a}h_{bc} + \frac{1}{2}\nabla^{c} \nabla_{b}h_{ac} \nonumber\\
&\quad{} - \frac{1}{4}g_{ab}\bigl[ \nabla^{c}\nabla^{d}h_{cd} + \Delta\bigl(g^{cd}h_{cd}\bigr) - h^{cd}R_{cd} \bigr]
-\frac{1}{4}S(g) h_{ab},\label{LinEinsteinOp}
\end{align}
where $\Delta=-g^{ab}\nabla_{a}\nabla_{b}$. Define the operator
\begin{align}\label{TeukOp}
 {\rm L}=g^{ab}\mathcal{C}_{a}\mathcal{C}_{b}-18\Psi_{2}
\end{align}
acting on scalar fields of weight $(w,p)$. We have the following lemma.
\begin{Lemma}\label{lem:OpId}
Let $(M,g_{ab})$ be Einstein--Hermitian, with $($possibly vanishing$)$ cosmological constant $\lambda$. Let $h_{ab}$ be an arbitrary metric perturbation \eqref{eq:LinMetric}, $\vartheta{E}_{ab}$ the linearized Einstein operator~\eqref{LinEinsteinOp}, and $\vartheta\Psi_0$ the linearized Weyl scalar~\eqref{eq:LinPsi0}. Furthermore, let $f_{a}$ be the Lee form \eqref{LeeForm}, and $Q^{abcd}$ the tensor field
\begin{align}\label{tensorQ}
Q^{abcd} = o^{A}o^{B}o^{C}o^{D}\epsilon^{A'B'}\epsilon^{C'D'}.
\end{align}
Then
\begin{align}\label{eq:OpIdentity}
-\mathring{\Omega}^{-1}Q^{acbd}(\nabla_{a}-4f_{a})\nabla_{d}\vartheta E_{bc} = {\rm L}\bigl[\mathring{\Omega}^{-1}\vartheta\Psi_0\bigr],
\end{align}
where ${\rm L}$ is the operator \eqref{TeukOp} and $\mathring{\Omega}$ is an auxiliary constant conformal factor, that is a scalar field with weights $w=1$, $p=0$ and $\nabla_{a}\mathring\Omega=0$.
\end{Lemma}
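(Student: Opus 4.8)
The plan is to prove the operator identity \eqref{eq:OpIdentity} by a direct computation that reduces the left-hand side to the Teukolsky operator acting on the rescaled linearized Weyl scalar. First I would recall the well-known fact (in the unperturbed Petrov-type-II/algebraically-special setting, here guaranteed by the Hermitian condition $\Psi_0 = 0$) that the Teukolsky operator arises from commuting derivatives in the linearized second Bianchi identity. Concretely, starting from \eqref{linWeylSpinor} for $\vartheta\Psi_{ABCD}$, one contracts four copies of $o^A$ into it to extract $\vartheta\Psi_0$, and the combination $Q^{acbd}(\nabla_a - 4f_a)\nabla_d \vartheta E_{bc}$ should be recognized, after translating $\vartheta E_{bc}$ via \eqref{LinEinsteinOp} into spinor language, as exactly two derivatives applied to the trace-free symmetric part of $h_{ab}$ in the pattern $\nabla^{A'}_{(A}\nabla^{B'}_{B}(\cdots)$ contracted against $o$'s — i.e.\ the same structure that produces $\vartheta\Psi_0$ up to curvature terms. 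The factors $(\nabla_a - 4f_a)$ and $\mathring\Omega^{-1}$ are bookkeeping for the conformal and GHP weights: since $\vartheta\Psi_0$ has weights $(w,p) = (-2,4)$ and $\mathring\Omega$ has weights $(1,0)$, the combination $\mathring\Omega^{-1}\vartheta\Psi_0$ has the weights on which ${\rm L}$ in \eqref{TeukOp} is designed to act, and the shift by $-4f_a$ is precisely the piece of the $\mathcal{C}_a$ connection (the Lee-form term $w f_a$ with appropriate weight, plus the GHP connection term $pP_a$ absorbed into the $o^A$'s being $\mathcal{C}$-covariantly constant) needed to convert bare $\nabla$'s into $\mathcal{C}$'s.

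The key steps, in order, would be: (1) substitute \eqref{LinEinsteinOp} into the left-hand side of \eqref{eq:OpIdentity}, noting that the terms proportional to $g_{ab}$ drop out upon contraction with the totally symmetric, trace-free $Q^{acbd}$, and similarly that $S(g) h_{ab}$-type terms and $\lambda$-terms produce only a multiple of $\vartheta\Psi_0$ rather than new derivative structure; (2) use the Einstein condition to replace Ricci terms by $\lambda g_{ab}$, again killing pure-trace contributions; (3) commute the derivatives $\nabla_a \nabla_d$ past each other and reorganize, using Ricci/Weyl commutators, so that the result is $\nabla^2$ acting on the combination appearing in \eqref{linWeylSpinor} — this is where the curvature corrections $-18\Psi_2$ are generated, via contractions of the Weyl spinor commutator terms with the four $o^A$'s, together with a contribution from the $f_a$-dependent terms that assemble the Lee form into the $\mathcal{C}_a\mathcal{C}_b$ structure; (4) invoke $\mathcal{C}_a o^B = 0$ (the Hermitian characterization, item 4 of the Remark in Section~\ref{sec:CGHP}) to pull all four $o^A$'s and the $\mathring\Omega^{-1}$ through the $\mathcal{C}$-derivatives freely, arriving at $g^{ab}\mathcal{C}_a\mathcal{C}_b$ applied to $\mathring\Omega^{-1}\vartheta\Psi_0$; (5) collect the curvature terms and check the numerical coefficient is exactly $-18\Psi_2$.

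I expect the main obstacle to be step (3): carefully tracking the curvature terms produced when commuting the two covariant derivatives and contracting against $o^Ao^Bo^Co^D$, so as to land on precisely the coefficient $18$ of $\Psi_2$ in \eqref{TeukOp} (and to confirm that $\tilde\Psi_{A'B'C'D'}$, Ricci, and $\lambda$ contributions either cancel or are correctly absorbed). A clean way to organize this is to work entirely in GHP/2-spinor notation with the $\mathcal{C}_a$ connection from the start: since $\mathcal{C}_a$ is metric, real, and annihilates $o^B$, the curvature of $\mathcal{C}_a$ acting on weighted scalars is controlled by $\Psi_2$ alone (by conformal invariance and the Hermitian structure — this is essentially the content behind \eqref{phiEinstein} and \eqref{identity}), so the commutator $[\mathcal{C}_a,\mathcal{C}_b]$ contributes a universal multiple of $\Psi_2$ with no further Ricci or $\tilde\Psi$ terms. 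Then the identity \eqref{eq:OpIdentity} becomes a matter of expressing $\vartheta E_{bc}$ through $\mathcal{C}$-derivatives and matching, and the coefficient $18 = 2\cdot(\text{weight-dependent constant})$ follows from how the four $o^A$ factors interact with the weight-$4$ GHP connection. I would present the computation compactly, flagging that all dropped terms vanish by symmetry/trace-freeness of $Q^{abcd}$ and that the nontrivial input is the Hermitian identity $\mathcal{C}_a o^B = 0$ together with the Einstein condition.
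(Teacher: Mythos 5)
Your plan takes a genuinely different route from the paper's -- a direct, fourth-order-in-$h_{ab}$ verification in which \eqref{LinEinsteinOp} and \eqref{linWeylSpinor} are substituted and derivatives are commuted until the left-hand side of \eqref{eq:OpIdentity} reorganizes into ${\rm L}\bigl[\mathring\Omega^{-1}\vartheta\Psi_0\bigr]$ -- but as written it has a genuine gap: essentially the whole content of the lemma sits in your step (3), which you explicitly defer ("I expect the main obstacle to be step (3)"). The precise $(\nabla_a-4f_a)$ structure, the absence of surviving $\tilde\Psi_{A'B'C'D'}$, Ricci and ${\rm d}f$ contributions, and the coefficient $-18\Psi_2$ are exactly what must be derived; your supporting claim that the commutator of $\mathcal{C}_a$ on weighted scalars "is controlled by $\Psi_2$ alone" is an assertion rather than a computation (the curvature of the weight bundle a priori contains trace-free Ricci and Lee-form-derivative pieces, and reducing these to $\Psi_2$ in the Einstein--Hermitian case is itself part of the work, related to \eqref{phiEinstein}). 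The peripheral observations are fine -- e.g.\ the pure-trace terms of \eqref{LinEinsteinOp} do drop out because $Q^{acbd}g_{bc}=0$, which follows from $o_Ao^A=0$ (note, though, that $Q^{abcd}$ is not totally symmetric as a tensor) -- but the decisive cancellations and the coefficient $18$ are never established, so the proposal is a programme rather than a proof.

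It is also worth seeing what the paper does instead, because it sidesteps the fourth-order computation entirely: it proves an \emph{exact, nonlinear} identity on an arbitrary Riemannian 4-manifold and then linearizes. The two inputs are (i) the spinor Bianchi identity $\nabla^{A'E}\Psi_{BCDE}=\nabla_{(B}^{B'}\Phi_{CD)B'}{}^{A'}$, which together with $\Phi_{bc}=-\tfrac12 E_{bc}$ makes the contracted divergence of the Weyl spinor manifestly first order in $E_{bc}$, so that contracting with $o^Ao^Bo^Co^D$ produces the left-hand side of \eqref{eq:OpIdentity} \emph{before} any perturbation theory; and (ii) the exact operator identity of \cite{araneda2018conformal} giving $o^Ao^Bo^Co^D\mathcal{C}_{AA'}\mathcal{C}^{A'E}\varphi_{BCDE}=\tfrac12\bigl(g^{ab}\mathcal{C}_a\mathcal{C}_b-18\Psi_2\bigr)\varphi_0+B$, where $B$ is quadratic in $(\kappa,\sigma,\Psi_0,\Psi_1)$. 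Linearizing about an Einstein--Hermitian background, where $E_{ab}$, $\kappa$, $\sigma$, $\Psi_0$, $\Psi_1$ all vanish (see \eqref{SFR}), kills $B$ at first order and yields \eqref{eq:OpIdentity} for an \emph{arbitrary} perturbation $h_{ab}$. If you wish to keep your direct route, you must actually carry out (or cite) the commutator bookkeeping of steps (3)--(5); otherwise the efficient fix is the linearize-an-exact-identity strategy, in which case the Bianchi identity -- absent from your outline -- is the missing key ingredient.
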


\begin{Corollary}\label{item:teuk}
Let $h_{ab}$ be a linearized Einstein perturbation $\vartheta E_{ab}=0$, and let $\chi=\mathring{\Omega}^{-1}\vartheta\Psi_0$. Then $\chi$ solves the Teukolsky equation
\begin{align} \label{TeukEq}
{\rm L}[\chi] = 0.
\end{align}
\end{Corollary}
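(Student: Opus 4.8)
The plan is to read the corollary off directly from Lemma~\ref{lem:OpId}. Since $(M,g_{ab})$ is Einstein--Hermitian and $h_{ab}$ is an arbitrary metric perturbation, the operator identity \eqref{eq:OpIdentity} applies to $h_{ab}$, namely
\begin{align*}
-\mathring{\Omega}^{-1}Q^{acbd}(\nabla_{a}-4f_{a})\nabla_{d}\vartheta E_{bc} = {\rm L}\bigl[\mathring{\Omega}^{-1}\vartheta\Psi_0\bigr].
\end{align*}

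First I would observe that the defining property of a linearized Einstein perturbation, $\vartheta E_{ab}=0$ (equivalently \eqref{eq:DRic}), is an identity of tensor fields on all of $M$, so that also $\nabla_{d}\vartheta E_{bc}=0$; hence the entire left-hand side of \eqref{eq:OpIdentity} vanishes. Next, because $\mathring\Omega$ is a nowhere-vanishing, covariantly constant scalar with weights $w=1$, $p=0$, the field $\chi=\mathring{\Omega}^{-1}\vartheta\Psi_0$ is a well-defined, properly weighted scalar field to which the operator ${\rm L}$ of \eqref{TeukOp} applies, and the right-hand side of \eqref{eq:OpIdentity} is by definition ${\rm L}[\chi]$. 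Equating the two sides then gives ${\rm L}[\chi]=0$, which is the Teukolsky equation \eqref{TeukEq}.

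I do not expect any genuine obstacle at the level of this corollary; all of the substantive content --- the second-order, Bianchi-type identity relating the Teukolsky operator applied to $\vartheta\Psi_0$ to derivatives of the linearized Einstein tensor, the commutations of covariant derivatives on the Einstein--Hermitian background, and the bookkeeping of conformal and GHP weights through the connection $\mathcal{C}_{a}$ --- is already carried out in the proof of Lemma~\ref{lem:OpId}. The only point to keep in mind is that the hypothesis is phrased in terms of $\vartheta E_{ab}$ rather than $\vartheta R_{ab}$, matching the condition \eqref{eq:DRic}, which in the ALF Ricci-flat case may equivalently be written with $R(s)_{ab}$ as noted there.
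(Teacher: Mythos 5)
Your proposal is correct and is exactly how the paper obtains the corollary: it is an immediate consequence of the operator identity \eqref{eq:OpIdentity} of Lemma~\ref{lem:OpId}, whose left-hand side vanishes identically once $\vartheta E_{ab}=0$ on all of $M$, leaving ${\rm L}[\chi]=0$. The paper gives no separate argument beyond this, so there is nothing further to add.
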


\begin{Remark} \label{remark:OpId}
\quad
\begin{enumerate}\itemsep=0pt
\item
The auxiliary constant conformal factor $\mathring\Omega$ is necessary for conformal invariance \big(note that~\smash{${\nabla_{a}\mathring\Omega=0}$} but \smash{$\mathcal{C}_{a}\mathring\Omega\neq0\big)$}, see the proof below. Once the operator $\mathcal{C}_{a}$ is written in terms of the ordinary Levi-Civita connection (see \eqref{eq:LNP} below), one can set $\mathring{\Omega}= 1$.
\item\label{item2-2.5}
If $(M,g_{ab})$ is Hermitian, and $\chi$ has $w(\chi)=-3$ and $p(\chi)=4$, then in Newman--Penrose notation, we have
\begin{align} \label{eq:LNP}
{\rm L}[\chi]&{}=2\bigl[(D+\tilde\varepsilon-3\varepsilon-4\rho-\tilde\rho)\bigl(D'+4\varepsilon'-\rho'\bigr) \nonumber \\
&\quad{}
 -\bigl(\delta+\tilde\beta'-3\beta-4\tau-\tau'\bigr)\bigl(\delta'+2\beta'-\tau'\bigr) - 3\Psi_2 \bigr]\chi,
\end{align}
so we see that ${\rm L}$ coincides with the Teukolsky operator \cite[equation~(2.12)]{Teukolsky1973}.
\end{enumerate}
\end{Remark}

\begin{proof}[Proof of Lemma \ref{lem:OpId}]
The strategy is to consider identities for an arbitrary Riemannian manifold $(M,g_{ab})$, and then to linearize around an Einstein--Hermitian metric. Consider then an arbitrary $(M,g_{ab})$, with Levi-Civita connection $\nabla_{a}$ and unprimed Weyl curvature spinor $\Psi_{ABCD}$. Let $J^{a}{}_{b}$ be a (locally defined) compatible almost-complex structure, and let $o^{A}$ be the associated spinor field as in Section \ref{sec:spinors}. We can then define conformally and GHP weighted fields as in Section \ref{sec:CGHP}, together with the connection $\mathcal{C}_{a}$ on the corresponding bundles.

Let $\mathring\Omega$ be an arbitrary constant conformal factor, that is a scalar field with weights~${w=1}$, ${p=0}$ and $\nabla_{a}\mathring\Omega=0$, and define
\begin{align}\label{GravSpin2Field}
 \varphi_{ABCD} := \mathring\Omega^{-1}\Psi_{ABCD}.
\end{align}
This object has weights $w=-1$, $p=0$, and it is essentially the ``gravitational spin 2 field'' of Penrose and Rindler \cite[equation~(9.6.40)]{PR:II}. We have
\begin{align*}
 \mathcal{C}_{AA'}\mathcal{C}^{A'E}\varphi_{BCDE} &=
(\nabla_{AA'}-4f_{AA'})\nabla^{A'E}\varphi_{BCDE} \\
 &= \mathring\Omega^{-1}(\nabla_{AA'}-4f_{AA'})\nabla^{A'E}\Psi_{BCDE} \\
 &= \mathring\Omega^{-1}(\nabla_{AA'}-4f_{AA'})\nabla_{(B}^{B'}\Phi_{CD)B'}{}^{A'},
\end{align*}
where in the second line we used the definition \eqref{GravSpin2Field} and the fact that $\mathring\Omega$ is constant, and in the third line we used the spinor form of the Bianchi identities \cite[equation~(4.10.7)]{PR:I}
adapted to Riemann signature (recall Section \ref{sec:spinors}).
Here $\Phi_{ABA'B'}$ is the trace-free Ricci spinor. Contracting with $o^Ao^Bo^Co^D$:
\begin{align}\label{AuxId}
\nonumber o^Ao^Bo^Co^D\mathcal{C}_{AA'}\mathcal{C}^{A'E}\varphi_{BCDE} &{} = \mathring\Omega^{-1}o^Ao^Bo^Co^D (\nabla_{AA'}-4f_{AA'})\nabla_{B}^{B'}\Phi_{CDB'}{}^{A'} \\
\nonumber &{} = \mathring\Omega^{-1}o^Ao^Co^Bo^D\epsilon^{A'C'}\epsilon^{B'D'} (\nabla_{AA'}-4f_{AA'})\nabla_{DD'}\Phi_{BCB'C'}\\
&{} = -\frac{1}{2}\mathring\Omega^{-1}Q^{acbd}(\nabla_{a}-4f_{a})\nabla_{d}E_{bc},
\end{align}
where in the last line we used the definition \eqref{tensorQ} and the identity \smash{$\Phi_{bc}=-\frac{1}{2}E_{bc}$} (see \cite[equa\-tion~(4.6.25)]{PR:I} and recall \eqref{eq:EinsteinOp}).

It remains to find a convenient expression for the first line in \eqref{AuxId}. This can be done using \cite[equations~(3.6) and (3.15)]{araneda2018conformal}. We have
\begin{align}\label{AuxId2}
o^Ao^Bo^Co^D\mathcal{C}_{AA'}\mathcal{C}^{A'E}\varphi_{BCDE} =
\frac{1}{2} \bigl(g^{ab}\mathcal{C}_{a}\mathcal{C}_{b}-18\Psi_{2}\bigr)\varphi_{0} + B,
\end{align}
where $\varphi_{0}= o^Ao^Bo^Co^D\varphi_{ABCD}=\mathring{\Omega}^{-1}\Psi_0$ and $B$ is a term which couples the GHP quantities $\kappa$, $\sigma$, $\Psi_0$, $\Psi_1$ quadratically. Equating the last line of~\eqref{AuxId} to the right-hand side of~\eqref{AuxId2}, and taking a~linearization around a metric which satisfies $E_{ab}|_{s=0}=0$, ${\Psi_{0}|_{s=0}=\Psi_{1}|_{s=0}=\kappa|_{s=0}=\sigma|_{s=0}=0}$, that is, an Einstein--Hermitian metric, see \eqref{SFR}, the result \eqref{eq:OpIdentity} follows.
\end{proof}

\begin{Lemma}\label{lem:divergence}
Let $(M, g_{ab})$ be Einstein--Hermitian, with volume form ${\rm d}\mu$.
Let $V$ be a four-dimen\-sion\-al region in $M$ with boundary $\partial{V}$, whose unit normal and induced volume form are~$n^{a}$,~${\rm d}\Sigma$, respectively. For any scalar field $\chi$ with conformal weight $w=-3$ and GHP weight~${p=4}$ satisfying~\eqref{TeukEq}, we have
\begin{align} \label{eq:MainIdentity}
0 =
\int_{\partial{V}} \Psi_{2}^{-4/3}\bar\chi(n^a\mathcal{C}_{a}\chi) \, {\rm d}\Sigma
- \int_{V}\Psi_{2}^{-4/3}\bigl(|\mathcal{C}\chi|^{2} + 18\Psi_{2}|\chi|^2\bigr){\rm d}\mu.
\end{align}
\end{Lemma}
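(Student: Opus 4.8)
\textbf{Proof plan for Lemma \ref{lem:divergence}.}
The plan is to apply the divergence theorem to a suitable weighted vector field built from $\chi$, its conjugate, and the Teukolsky operator $\mathrm{L}$, exploiting the fact that the background Einstein--Hermitian manifold is conformally K\"ahler with conformal factor $\phi\propto\Psi_2^{1/3}$, see \eqref{phiEinstein}. First I would observe that since $\chi$ has weights $(w,p)=(-3,4)$, the product $\bar\chi\,\mathcal{C}^a\chi$ is a vector field of weight $w=-6$ and $p=0$ (using that $\bar\chi$ has weights $(-3,-4)$ by item~1 of the Remark, and $\mathcal{C}^a\chi$ has weights $(-5,4)$). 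Hence identity \eqref{identity} applies with $u^a=\bar\chi\,\mathcal{C}^a\chi$ and $w=-6$, giving
\begin{align*}
\mathcal{C}_a\bigl(\bar\chi\,\mathcal{C}^a\chi\bigr)
= \phi^{4}\,\nabla_a\bigl(\phi^{-4}\,\bar\chi\,\mathcal{C}^a\chi\bigr).
\end{align*}
On the other hand, expanding the left-hand side with the Leibniz rule and using that $\mathcal{C}_a$ is real and metric (item~2), we get $\mathcal{C}_a(\bar\chi\,\mathcal{C}^a\chi)=(\mathcal{C}_a\bar\chi)(\mathcal{C}^a\chi)+\bar\chi\,\mathcal{C}_a\mathcal{C}^a\chi = |\mathcal{C}\chi|^2 - \bar\chi\,\square\chi$, where $\square=g^{ab}\mathcal{C}_a\mathcal{C}_b$ so that $\square\chi=\mathrm{L}[\chi]+18\Psi_2\chi$. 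Since $\chi$ solves \eqref{TeukEq}, $\mathrm{L}[\chi]=0$, so $\mathcal{C}_a(\bar\chi\,\mathcal{C}^a\chi) = |\mathcal{C}\chi|^2 + 18\Psi_2|\chi|^2$ up to the sign convention in $\square$; I would fix the sign by tracking the convention $\Delta=-g^{ab}\nabla_a\nabla_b$ used in the paper.

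Next I would integrate this identity over $V$. Combining the two expressions for $\mathcal{C}_a(\bar\chi\,\mathcal{C}^a\chi)$ and using $\phi^4\propto\Psi_2^{4/3}$ together with the conformal-K\"ahler relation $\mathcal{C}_a\phi=0$ (item~5), the density $\phi^{-4}\bigl(|\mathcal{C}\chi|^2+18\Psi_2|\chi|^2\bigr)$ equals $\nabla_a\bigl(\phi^{-4}\bar\chi\,\mathcal{C}^a\chi\bigr)$ as an \emph{ordinary} divergence of a genuine (weight-zero, hence well-defined) vector field; note the auxiliary constant conformal factor $\mathring\Omega$ of Lemma \ref{lem:OpId} can be set to $1$ here exactly because, after factoring out $\phi$-powers, everything is expressed through $\nabla_a$. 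Applying the classical divergence theorem on $(V,\mathrm{d}\mu)$ with boundary $\partial V$, unit normal $n^a$ and induced measure $\mathrm{d}\Sigma$ then yields
\begin{align*}
\int_V \Psi_2^{-4/3}\bigl(|\mathcal{C}\chi|^2+18\Psi_2|\chi|^2\bigr)\mathrm{d}\mu
= \int_{\partial V}\Psi_2^{-4/3}\,\bar\chi\,(n^a\mathcal{C}_a\chi)\,\mathrm{d}\Sigma,
\end{align*}
which is \eqref{eq:MainIdentity} after rearranging. The proportionality constant relating $\phi^{-4}$ and $\Psi_2^{-4/3}$ is positive and real, and can be normalized away or absorbed; I would just check it does not introduce a sign.

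The main obstacle I anticipate is bookkeeping rather than conceptual: verifying that the weights of $u^a=\bar\chi\,\mathcal{C}^a\chi$ are exactly $(w,p)=(-6,0)$ so that \eqref{identity} is applicable with the correct exponent $w+2=-4$, and pinning down the sign of the Laplacian term so that the $18\Psi_2|\chi|^2$ contribution enters \eqref{eq:MainIdentity} with the stated sign. One must also make sure that $\Psi_2$ is nowhere zero on the region of interest so that $\Psi_2^{-4/3}$ (equivalently $\phi^{-4}$) is well defined and smooth --- for Einstein--Hermitian non-K\"ahler $4$-manifolds $\Psi_2\neq 0$ since $\phi\propto\Psi_2^{1/3}$ is the (non-vanishing) conformal factor, so this is automatic, but it should be remarked. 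A minor point is reconciling the reality of $\mathcal{C}_a$ and of $\Psi_2$ (noted after \eqref{weylspinorcomp}) to see that the integrand $\Psi_2^{-4/3}\bigl(|\mathcal{C}\chi|^2+18\Psi_2|\chi|^2\bigr)$ is manifestly real, which is what makes the identity meaningful as stated.
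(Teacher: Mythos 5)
Your proposal is correct and is essentially the paper's own argument: apply the conformally K\"ahler identity \eqref{identity} (with $\phi\propto\Psi_2^{1/3}$ from \eqref{phiEinstein}) to the weight-$(w,p)=(-6,0)$ covector $\bar\chi\,\mathcal{C}_a\chi$, expand $\Psi_2^{-4/3}\bar\chi\,{\rm L}[\chi]=0$ using the Leibniz rule and the reality of $\mathcal{C}_a$, and integrate the resulting ordinary divergence over $V$. The only touch-ups are the two bookkeeping points you flagged yourself: the $w$ entering \eqref{identity} is the weight of the covector $u_a=\bar\chi\,\mathcal{C}_a\chi$ (namely $-6$, giving the exponent $w+2=-4$; the raised-index vector has conformal weight $-8$), and the expansion is $\mathcal{C}_a\bigl(\bar\chi\,\mathcal{C}^a\chi\bigr)=|\mathcal{C}\chi|^2+\bar\chi\,g^{ab}\mathcal{C}_a\mathcal{C}_b\chi$ with a plus sign, so ${\rm L}[\chi]=0$ (whose zeroth-order term is $-18\Psi_2\chi$, with no extra minus from any Laplacian convention) directly yields the bulk integrand $\Psi_2^{-4/3}\bigl(|\mathcal{C}\chi|^2+18\Psi_2|\chi|^2\bigr)$ with the stated sign.
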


\begin{proof}
We have that $(M,g_{ab})$ is Einstein and conformally K\"ahler, see \cite{MR707181}, so identities \eqref{phiEinstein} and \eqref{identity} hold.
Let $\chi$ be a solution to \eqref{TeukEq}, with $w=-3$, $p=4$.
First notice that the covector field $\bar\chi\mathcal{C}_{a}\chi$ has weights $w=-6$, $p=0$, so using \eqref{phiEinstein} and \eqref{identity}, we have
\begin{align} \label{identity2}
\mathcal{C}_{a}(\bar\chi\mathcal{C}^{a}\chi) = \Psi_{2}^{4/3}\nabla_{a}\bigl[\Psi_{2}^{-4/3}\bar\chi\mathcal{C}^{a}\chi\bigr].
\end{align}
Now we multiply \eqref{TeukEq} by \smash{$\Psi_{2}^{-4/3}\bar\chi$} and use the Leibniz property of $\mathcal{C}_{a}$ together with \eqref{identity2},
\begin{align*}
 0 ={}& \Psi_{2}^{-4/3}\bar\chi{\rm L}[\chi]
 =  \Psi_{2}^{-4/3}\bar\chi g^{ab}\mathcal{C}_{a}\mathcal{C}_{b}\chi - 18\Psi_{2}^{-1/3}|\chi|^2 \\
 ={}& \Psi_{2}^{-4/3}g^{ab}\mathcal{C}_{a}(\bar\chi \mathcal{C}_{b}\chi) - \Psi_{2}^{-4/3}g^{ab}(\mathcal{C}_{a}\bar\chi)(\mathcal{C}_{b}\chi) - 18\Psi_{2}^{-1/3}|\chi|^2 \\
 ={}& \nabla_{a}\bigl(\Psi_{2}^{-4/3}\bar\chi \mathcal{C}^{a}\chi\bigr) - \Psi_{2}^{-4/3}\bigl(|\mathcal{C}\chi|^{2} + 18\Psi_{2}|\chi|^2\bigr).
\end{align*}
Integrating this equation over a four-dimensional region $V$ and using the divergence theorem, we get \eqref{eq:MainIdentity}.
\end{proof}

\section{Mode stability}
\label{sec:modestability}

\subsection{ALF instantons}
The proof of the following lemma is similar to the proof of \cite[Theorem~A]{biquard2023gravitational}.
\begin{Lemma}\label{lem:Psi2}
Let $(M, g_{ab})$ be a Hermitian non-K\"ahler ALF instanton. Then $\Psi_2 > 0$ in $M$.
\end{Lemma}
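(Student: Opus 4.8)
The plan is to exploit the structure of a Hermitian non-Kähler ALF instanton as an \emph{Einstein} (here Ricci-flat) manifold which is \emph{conformally Kähler}, as recorded in the preliminaries: by \eqref{phiEinstein} we have a conformal factor $\phi \propto \Psi_2^{1/3}$, and the conformally rescaled metric $\hat g_{ab} = \phi^{-2} g_{ab} \propto \Psi_2^{-2/3} g_{ab}$ is Kähler wherever $\Psi_2 \neq 0$. The first point to establish is that $\Psi_2$ has no zeros on $M$: since $(M,g_{ab})$ is Ricci-flat Hermitian non-Kähler, the Goldberg--Sachs theorem makes the metric algebraically special with $\Psi_0 = \Psi_1 = 0$, so the only nonzero component of the (anti-self-dual) Weyl spinor is $\Psi_2$; if $\Psi_2$ vanished somewhere, the Weyl tensor's anti-self-dual part would vanish there, and one then argues (using that non-Kähler forces $\Psi_2 \not\equiv 0$, together with either a unique-continuation/analyticity argument for the Einstein equation or the explicit structure of the Lee form $f_a = \nabla_a \log\phi$) that the zero set must be empty. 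This is the same kind of argument underlying \cite[Theorem~A]{biquard2023gravitational}, which the excerpt tells us to follow. Once $\Psi_2$ is nowhere zero, it has a well-defined sign on each connected component, and since $M$ is connected, $\Psi_2$ has a constant sign; replacing $o^A \leftrightarrow \iota^A$ if necessary (which flips the sign of the $p$-weight but here we only care about $\Psi_2$ real) we may fix the orientation/dyad so that the sign is what we want, but the real content is to pin down that the sign is \emph{positive} rather than merely constant.

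To fix the sign as positive I would use the ALF asymptotics together with the classification recalled in the introduction: a Hermitian non-Kähler ALF instanton is (up to the half-flat Taub-NUT case, which is excluded) Euclidean Kerr, Taub-bolt, or Chen--Teo. On the ALF end each of these is asymptotic to a flat model, and the conformally Kähler structure extends to the end; the scalar curvature $\hat S$ of the Kähler metric $\hat g_{ab}$ is, by the standard formula relating conformally related Einstein and extremal Kähler metrics, a fixed positive multiple of $\phi^{?}$ times $\Psi_2^{2/3}$ up to sign — more precisely, for a Ricci-flat $g$ the conformal factor $\phi$ satisfies an equation forcing $\phi > 0$, and LeBrun's observation that $\hat g$ is then a positive-scalar-curvature extremal Kähler metric gives the sign. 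Alternatively, and perhaps more cleanly, I would evaluate $\Psi_2$ directly in the asymptotic region: ALF decay gives $\Psi_2 = O(r^{-3})$, and matching to the known asymptotic form of the NUT/mass parameters (which are the moduli of the two-dimensional moduli space) shows $\Psi_2$ has a definite sign near infinity determined by the conventions fixing $J$; since $\Psi_2$ is nonvanishing and connected, that sign propagates to all of $M$.

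The step I expect to be the main obstacle is precisely the \emph{non-vanishing of $\Psi_2$}: showing $\Psi_2 \neq 0$ everywhere, rather than just generically. The constancy-of-sign-then-identify-the-sign part is comparatively soft once non-vanishing is known, but ruling out zeros of $\Psi_2$ on a general such instanton — without invoking the explicit metrics — requires the real input. I would handle it by the conformally Kähler reformulation: write $b_{ab} = \phi^{-2} g_{ab}$ with $\phi \propto \Psi_2^{1/3}$, note $(M \setminus \{\Psi_2 = 0\}, b_{ab})$ is Kähler with scalar curvature $\mathrm{Scal}(b) \propto \phi^{-1}$ a positive function (by the Einstein condition on $g$), hence in particular $\phi \to 0$ would force a curvature blow-up of the complete Ricci-flat metric $g$, which is impossible; so $\phi$, and therefore $\Psi_2$, is bounded away from $0$. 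Being careful that this argument survives at the ALF end (where one must check the conformal completion behaves correctly) is the delicate point, and here one can fall back on \cite{biquard2023gravitational, li2023classification} for the needed control of the geometry of the end.
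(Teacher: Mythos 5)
Your proposal has the right general landscape (conformally K\"ahler structure, nonvanishing of $\Psi_2$ plus a sign determination), but both of the load-bearing steps are missing or wrong. First, the nonvanishing. The paper gets this in one line from Derdzi\'nski's result \cite[Proposition~5, p.~420]{MR707181}: for an Einstein $4$-manifold with degenerate $\mathcal{W}^+$, either $\mathcal{W}^+\equiv 0$ or $\mathcal{W}^+$ has no zeros; non-K\"ahlerness rules out the first alternative. Your substitute argument fails: you claim that $\phi\propto\Psi_2^{1/3}\to 0$ would force a curvature blow-up of the complete metric $g$, and conclude that $\Psi_2$ is ``bounded away from $0$''. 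That conclusion is false on every ALF instanton, since $\Psi_2=O\bigl(r^{-3}\bigr)$ tends to zero at the end; and a zero of $\Psi_2$ forces no blow-up of the curvature of $g$ --- what degenerates is only the auxiliary conformally rescaled K\"ahler metric, which is not required to be complete or to have bounded curvature. Relatedly, your assertion that the K\"ahler scalar curvature is positive ``by the Einstein condition'' is circular: by \eqref{Psi2Scal} one has $\operatorname{sign}\hat S=\operatorname{sign}\Psi_2$, so positivity of $\hat S$ \emph{is} the statement of the lemma, not an input. (A minor point: the K\"ahler metric is $\hat g_{ab}=\varphi^2 g_{ab}$ with $\varphi\propto\bigl|\mathcal{W}^+\bigr|^{1/3}$, i.e., proportional to $|\Psi_2|^{2/3}g_{ab}$, not $\Psi_2^{-2/3}g_{ab}$ as you wrote.)

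Second, the sign determination, which you treat as ``comparatively soft'', is where the paper's actual argument lives, and neither of your routes supplies it. Appealing to the classification and asymptotic mass/NUT parameters is the alternative the paper explicitly mentions in Remark~\ref{rem:Psi2} and deliberately avoids (for Chen--Teo the required computation is lengthy, and you do not verify the sign of the relevant asymptotic coefficient for any of the families); invoking ``LeBrun's observation'' that $\hat g$ has positive scalar curvature is again assuming the conclusion in the ALF setting. The paper's proof is a classification-free maximum-principle argument: with $\varphi=24^{1/6}\bigl|\mathcal{W}^+\bigr|^{1/3}>0$, Derdzi\'nski's theorem makes $\hat g=\varphi^2 g$ extremal K\"ahler with $\hat S\varphi^3=6\Delta\varphi$, where $\Delta=-g^{ab}\nabla_a\nabla_b$; the ALF decay $\bigl|\mathcal{W}^+\bigr|=O\bigl(r^{-3}\bigr)$ gives $\varphi\to 0$ at infinity, so $\varphi$ attains an interior maximum, at which $\Delta\varphi\geq 0$ and hence $\hat S\geq 0$; since $\hat S$ is nowhere zero and of constant sign, $\hat S>0$ everywhere, and $\Psi_2>0$ follows from \eqref{Psi2Scal}. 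This is the key idea absent from your proposal.
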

\begin{Remark}\label{rem:Psi2}\quad
\begin{enumerate}\itemsep=0pt
\item
In view of the classification of Hermitian non-K\"ahler ALF instantons \cite{li2023classification}, one could prove Lemma \ref{lem:Psi2} by an explicit calculation for the relevant families of instantons. For the Chen--Teo case, the calculation needed is lengthy but can be done along the lines in \cite{aksteiner2021gravitational}.
\item \label{point:Psi2}
For a conformally K\"ahler manifold $(M,g_{ab})$, where the K\"ahler metric and its scalar curvature are $\hat{g}_{ab}=\varphi^2 g_{ab}$ and $\hat{S}$, it holds
\begin{align}\label{Psi2Scal}
\Psi_{2}=\varphi^2 \frac{\hat{S}}{12},
\end{align}
see \cite{araneda2023}. Thus $\operatorname{sign}\Psi_2=\operatorname{sign}\hat{S}$, so it is sufficient to show that $\hat{S}>0$.
\end{enumerate}
\end{Remark}
\begin{proof}
Let $\mathcal{W}^+$ be the self-dual part of the Weyl tensor.
By \cite[Proposition~5, p.~420]{MR707181}, we have that $\mathcal{W}^+$ does not have zeros in $M$, so $\Psi_2$ does not have zeros either. Hence, by \eqref{Psi2Scal}, $\hat{S}$~does not change sign.
With the conformal factor
\begin{align}\label{CF}
\varphi = 24^{1/6} \bigl|\mathcal{W}^+\bigr|_g^{1/3}
\end{align}
the metric $\hat g_{ab} = \varphi^2 g_{ab}$ is extremal K\"ahler with scalar curvature $\hat{S}$ satisfying
\begin{align*}
\hat{S} \varphi^3 = 6 \Delta \varphi,
\end{align*}
where $\Delta = -g^{ab} \nabla_a \nabla_b$. By construction, $\varphi > 0$. Recalling Definition \ref{def:Onotation}, we have $\bigl|\mathcal{W}^+\bigr| = O\bigl(r^{-3}\bigr) $ so $\varphi \to 0$ at $\infty$. Hence $\varphi$ must have a local maximum at some $x \in M$, and $\Delta \varphi {|}_{x} \geq 0$. This implies $\hat S \varphi^3 \big{|}_x > 0$, which since $\varphi > 0$ implies~${\hat S(x) > 0}$.
By point \eqref{point:Psi2} of Remark~\ref{rem:Psi2}, we find that~${\Psi_2 > 0}$.
\end{proof}

We are now ready to prove our main theorem.

\begin{proof}[Proof of Theorem \ref{thm:modestab}]
Consider an ALF vacuum perturbation $h_{ab}$, that is, $h_{ab}$ satisfies \eqref{eq:DRic} and $\nabla^k h_{ab} = O\bigl(r^{-1-k}\bigr)$ for any integer $k \geq 0$. We recall that the symbols $O$, $O^{*}$ used here and below were introduced in Definition \ref{def:Onotation}.
Let $\chi = \vartheta \Psi_0$ be the linearized extreme Weyl scalar. Since this involves two derivatives of $h_{ab}$ (see \eqref{linWeylSpinor}), the ALF assumption for the perturbation implies $\chi = O^*\bigl(r^{-3}\bigr)$.
In particular, we have $\mathcal{C}_a \chi = O\bigl(r^{-4}\bigr)$. In addition, the ALF condition for the background instanton implies \smash{$\Psi_2 = O\bigl(r^{-3}\bigr)$}, so \smash{$\Psi_2^{-4/3} = O\bigl(r^4\bigr)$}. Therefore,~letting ${V = \{r < R\}}$, from the above we deduce that on $\partial V$ we have
\begin{align*}
\Psi_{2}^{-4/3}\bar\chi(\mathcal{C}_{a}\chi) =O\bigl(r^{-3}\bigr)
\end{align*}
while $A(\partial V) = O\bigl(r^2\bigr)$. This shows that the boundary term in \eqref{eq:MainIdentity} is $O\bigl(r^{-1}\bigr)$ and hence letting $r \to \infty$, we have
\begin{align*}
0 = \int_{M}\Psi_{2}^{-4/3}\bigl(|\mathcal{C}\chi|^{2} + 18\Psi_{2}|\chi|^2\bigr){\rm d}\mu.
\end{align*}
Since by Lemma \ref{lem:Psi2} $\Psi_2 > 0$, we get $\chi=0$ and the result follows.
\end{proof}

\subsection{The compact case}
In this section, we extend our mode stability result (Theorem~\ref{thm:modestab})
to the compact case. The only known Ricci-flat compact 4-manifolds are the flat 4-torus and K3 surfaces (which are half-flat), so we need to include a cosmological constant $\lambda\neq0$ (see Remark~\ref{remark:halfflat}).

A classification of compact Einstein--Hermitian (non-K\"ahler) 4-manifolds with $\lambda>0$ is known from LeBrun \cite{MR2899877}, the only possibilities are the Fubini--Study metric on \smash{$\mathbb{CP}^2$} (with orientation opposite to the K\"ahler one), the Page metric on \smash{$\mathbb{CP}^2\#\overline{\mathbb{CP}}^2$}, or the Chen--LeBrun--Weber metric on \smash{$\mathbb{CP}^2\#2\overline{\mathbb{CP}}^2$}. We note that the Page metric corresponds to a special limit of the Riemannian Kerr--de Sitter solution \cite{Page}.

\begin{proof}[Proof of Theorem \ref{thm:modestabCompact}]
Let $(M,g_{ab})$ be a compact Einstein--Hermitian 4-manifold with $\lambda>0$, and consider a metric perturbation $h_{ab}$. Note that Lemmas \ref{lem:OpId} and \ref{lem:divergence} apply also to the compact case. By item~(\ref{item2-2.5}) in Remark~\ref{remark:OpId}, if $h_{ab}$ solves $\vartheta E_{ab}=0$, then we have a solution $\chi$ to equation~\eqref{TeukEq}, so identity \eqref{eq:MainIdentity} applies. From the above list of compact Einstein--Hermitian instantons, we see that all of them are closed, so the boundary term in \eqref{eq:MainIdentity} vanishes:
\begin{align*}
0 = \int_{M}\Psi_{2}^{-4/3}\bigl(|\mathcal{C}\chi|^{2} + 18\Psi_{2}|\chi|^2\bigr){\rm d}\mu.
\end{align*}
The sign of $\Psi_2$ can be determined by an analog of Lemma \ref{lem:Psi2}: from item \eqref{point:Psi2} in Remark \ref{rem:Psi2}, we need only focus on the sign of the scalar curvature $\hat{S}$ of the conformally related K\"ahler metric $\hat{g}_{ab}=\varphi^{2}g_{ab}$, where $\varphi$ is still given by \eqref{CF}. Using \cite[equation~(D.9)]{MR757180} with $\Omega=\varphi$, the conformal behaviour of scalar curvature is
\begin{align*}
 \varphi^2\hat{S} = S + 6\varphi^{-1}\Delta\varphi.
\end{align*}
Since $S=4\lambda>0$, and since the proof of Lemma \ref{lem:Psi2} applies to show that $\varphi^{-1}\Delta\varphi>0$, we have~${\hat{S}>0}$, and thus $\Psi_2>0$. So $\chi=0$, and the result follows.
\end{proof}

\subsection{Negative modes} 
Here we comment on the compatibility of our mode stability results with other notions of stability in the literature, both in the ALF and compact cases.

A frequently used definition of Riemannian linear stability for Einstein metrics, see, for example, \cite[Definition~4.63]{MR2371700}, is in terms of a variational problem: given the Einstein--Hilbert functional $S$, an Einstein metric $g_{ab}$ is said to be stable if the second variation of $S$ at $g_{ab}$ is negative for all compactly supported, trace-free metric perturbations. If, on the other hand, one can find a perturbation such that the second variation of $S$ is positive, then $g_{ab}$ is said to be unstable.

The above definition is often formulated as an eigenvalue problem: if $h_{ab}$ satisfies the TT conditions $\nabla^{a}h_{ab} = 0$ and $g^{ab}h_{ab} = 0$, one considers the problem $L(h)_{ab}=\mu h_{ab}$, where $L(h)_{ab}=-g^{cd}\nabla_{c}\nabla_{d}h_{ab}-2R_{a}{}^{c}{}_{b}{}^{d}h_{cd}$, and the solution is unstable if there is a negative mode $\mu<0$. See, for example, \cite[Section V]{Gross}, where a negative mode is found for the Schwarzschild instanton, and used to argue about the {\em semi-classical} instability of the solution; see also Witten's work~\cite{Witten}.

It was recently shown in \cite{BiquardOzuch} that if $(M,g_{ab})$ is a conformally K\"ahler 4-manifold which is either compact and Einstein, or ALF and Ricci-flat, then it is unstable in the above variational sense. Here we point out the following.
\begin{Proposition}
The unstable metric perturbations found in {\rm \cite{BiquardOzuch}} are conformally half-flat: the unprimed linearized Weyl curvature spinor identically vanishes.
\end{Proposition}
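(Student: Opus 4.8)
The claim is that the unstable perturbations $h_{ab}$ constructed in \cite{BiquardOzuch} satisfy $\vartheta\Psi_{ABCD}=0$, where $\vartheta\Psi_{ABCD}$ is the linearized unprimed (anti-self-dual) Weyl spinor \eqref{linWeylSpinor}. My plan is to recall the explicit form of these perturbations and compute $\vartheta\Psi_{ABCD}$ directly using conformal invariance. The perturbations in \cite{BiquardOzuch} arise from deforming the conformal factor relating $g_{ab}$ to its extremal K\"ahler representative $\hat g_{ab}=\varphi^2 g_{ab}$: concretely, they are of the form $h_{ab} = u\, g_{ab}$ modulo gauge (a pure conformal rescaling) for a suitable function $u$ built from the destabilizing direction, or more precisely a perturbation that becomes a conformal rescaling of the K\"ahler metric after a diffeomorphism. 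So the first step is: identify $h_{ab}$, up to the action of the Lie derivative $\mathcal{L}_X g_{ab}$ for some vector field $X$, as $h_{ab}^{\mathrm{conf}} = 2u\,g_{ab}$ for a scalar $u$.

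**Key steps.** First I would observe that $\vartheta\Psi_{ABCD}$ is gauge-invariant on an Einstein--Hermitian background in the sense that it is unchanged under $h_{ab}\mapsto h_{ab}+\mathcal{L}_X g_{ab}$; this follows because $\Psi_{ABCD}$ is a natural tensor and the background is Einstein--Hermitian (indeed, since the background $\Psi_0$ vanishes, even the components are gauge invariant, and the full argument for $\vartheta\Psi_{ABCD}$ as a curvature quantity on an Einstein background is standard). Hence it suffices to compute $\vartheta\Psi_{ABCD}$ on the conformal representative $h_{ab}^{\mathrm{conf}} = 2u\,g_{ab}$. Second, I would use the conformal transformation law of the Weyl spinor: under $g_{ab}\mapsto\Omega^2 g_{ab}$ the unprimed Weyl \emph{tensor} $W^{a}{}_{bcd}$ is invariant, so $\Psi_{ABCD}$ scales by a fixed power of $\Omega$. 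Linearizing $\Omega^2 = 1+2u+O(u^2)$, i.e.\ $\Omega = 1+u+\cdots$, gives $\vartheta\Psi_{ABCD} = c\,u\,\Psi_{ABCD}$ for a constant $c$ determined by the conformal weight of $\Psi_{ABCD}$ (with the index placement fixed in \eqref{linWeylSpinor}, one checks $c=0$: the $\frac14 g^{ef}h_{ef}\Psi_{ABCD}$ term in \eqref{linWeylSpinor} exactly cancels the contribution from the conformal variation of $\nabla^{A'}_{(A}\nabla^{B'}_{B}h_{CD)A'B'}$ when $h_{ab}$ is pure trace). Third, and most cleanly, I would invoke Lemma~\ref{lem:OpId}: the combination $\mathring\Omega^{-1}\vartheta\Psi_0$ satisfies the Teukolsky equation whenever $\vartheta E_{ab}=0$, but the Biquard--Ozuch perturbations are \emph{not} Einstein perturbations in general — so instead I would use the more basic fact that the whole spinor $\vartheta\Psi_{ABCD}$, not just $\vartheta\Psi_0$, can be read off from \eqref{linWeylSpinor}, plug in $h_{ab}=2u g_{ab}$, and verify term by term that the symmetrized double-divergence-type expression reduces to a multiple of $u\,\Psi_{ABCD}$ which then cancels against the trace term. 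Since $h_{ab}$ is conformal, $h_{CDA'B'} = 2u\,\epsilon_{CD}\epsilon_{A'B'}$ (as a spinor, the symmetric trace-free part in $CD$ vanishes), so $\nabla^{A'}_{(A}\nabla^{B'}_{B}h_{CD)A'B'} = 2\nabla^{A'}_{(A}\nabla_{B B' | } (u\,\epsilon_{CD)}\epsilon^{B'}{}_{A'})$, which upon symmetrization over $(ABCD)$ picks up only the term where two derivatives hit $u$ contracted into $\epsilon_{CD}$ — this is forced to vanish under total symmetrization of $ABCD$ because $\epsilon_{CD}$ is antisymmetric — leaving exactly $\frac14 g^{ef}h_{ef}\Psi_{ABCD} = u\,\Psi_{ABCD}$ from the inhomogeneous term plus a commutator/curvature contribution proportional to $u\,\Psi_{ABCD}$; collecting coefficients gives $\vartheta\Psi_{ABCD}=0$.

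**Main obstacle.** The delicate point is pinning down the precise form of the destabilizing perturbations of \cite{BiquardOzuch} and confirming they are genuinely conformal-plus-gauge (rather than conformal-plus-gauge-plus-a-TT-piece), since only then does the argument go through; if there is a residual transverse-traceless component one must show \emph{it} also has vanishing linearized unprimed Weyl spinor, which is a stronger statement. I expect that \cite{BiquardOzuch} constructs the negative mode precisely as (the TT-projection of) a conformal variation $\dot\varphi$ of the extremal K\"ahler factor, and that the TT-projection does not alter $\vartheta\Psi_{ABCD}$ because the extra gauge and trace terms subtracted in forming the TT part are themselves pure-gauge plus pure-trace, each of which contributes zero to $\vartheta\Psi_{ABCD}$ by the computation above. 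Making this book-keeping airtight — in particular tracking that the Einstein-operator constraint used to define the TT gauge does not reintroduce a curvature-carrying piece — is where the real work lies; the conformal-invariance computation itself is short.
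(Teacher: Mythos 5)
There is a genuine gap, and it is at the very first step: your identification of the Biquard--Ozuch negative modes as ``conformal rescalings modulo gauge'' is not correct. In the variational setting described in the paper, instability is tested on trace-free (indeed TT) perturbations, so a destabilizing direction cannot be a pure trace $h_{ab}=2u\,g_{ab}$; what \cite{BiquardOzuch} actually constructs is the trace-free perturbation $h_{ab}=(\omega^{-}\circ\tau)_{ab}=\phi_{A'B'}K_{AB}$, the composition of a closed anti-self-dual $2$-form (a Maxwell spinor $\phi_{A'B'}$, which in the ALF case is $\nabla_{A(A'}X^{A}_{B')}$ for the Killing field $X^a$) with the conformal Killing--Yano tensor of the conformally K\"ahler structure (a Killing spinor $K_{AB}$). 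This is a genuinely TT-type tensor carrying both primed and unprimed spinor structure, not ``conformal-plus-gauge,'' and your entire computation is built on the wrong ansatz. The paper's proof instead plugs $h_{ab}=\phi_{A'B'}K_{AB}$ into \eqref{linWeylSpinor}: the trace term drops because $g^{ab}h_{ab}=0$, and the remaining term $\tfrac12\nabla^{A'}_{(A}\nabla^{B'}_{B}\bigl[K_{CD)}\phi_{A'B'}\bigr]$ vanishes by the Maxwell and Killing spinor equations \eqref{MKS}. Your ``main obstacle'' paragraph correctly senses that pinning down the form of the perturbation is the crux, but the guess you then make resolves it incorrectly, and no amount of book-keeping about TT projections repairs an argument whose input tensor has the wrong structure.

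Two further steps in your outline are also wrong as stated. First, the full spinor $\vartheta\Psi_{ABCD}$ is \emph{not} invariant under $h_{ab}\mapsto h_{ab}+\mathcal{L}_{X}g_{ab}$: it shifts by $\mathcal{L}_{X}\Psi_{ABCD}$, which is nonzero on these backgrounds since $\Psi_2\neq0$; only the component $\vartheta\Psi_0$ is gauge invariant (background $\Psi_0=\Psi_1=0$), so you cannot reduce to a gauge representative when proving the full statement $\vartheta\Psi_{ABCD}=0$. Second, your claimed cancellation $c=0$ for a pure-trace perturbation fails: as you yourself note, for $h_{ab}=2u\,g_{ab}$ the derivative term in \eqref{linWeylSpinor} dies under total symmetrization (it is proportional to $\epsilon_{CD}$), but then nothing is left to cancel the inhomogeneous term, which gives $\tfrac14 g^{ef}h_{ef}\Psi_{ABCD}=2u\,\Psi_{ABCD}\neq0$. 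So even if the perturbation were pure trace, the conclusion $\vartheta\Psi_{ABCD}=0$ would not follow by this route; the vanishing in the actual proof is specific to the $\phi_{A'B'}K_{AB}$ structure and the equations \eqref{MKS} it satisfies.
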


\begin{Remark}
The above result means that $\vartheta\Psi_{ABCD}=0$, thus in particular $\vartheta\Psi_0=0$, so we see that the variational instability is still compatible with mode stability in the sense of Definition~\ref{Def:modestability}.
\end{Remark}

\begin{proof}
In both the compact and ALF cases, the unstable metric perturbations in \cite{BiquardOzuch} are given by the composition of a closed anti-self-dual 2-form $\omega^{-}$ and the conformal Killing--Yano tensor $\tau$ associated to the conformal K\"ahler structure; see \cite{BiquardOzuch}. In spinor notation, this can be expressed as follows: $\omega^{-}_{ab}=\phi_{A'B'}\epsilon_{AB}$, $\tau_{ab}=K_{AB}\epsilon_{A'B'}$, and the unstable perturbation is
\begin{align} \label{dest}
h_{ab} = (\omega^{-}\circ\tau)_{ab} = \phi_{A'B'}K_{AB},
\end{align}
where $\phi_{A'B'}$ and $K_{AB}$ satisfy the Maxwell and Killing spinor equations, respectively,
\begin{align} \label{MKS}
\nabla^{AA'}\phi_{A'B'} = 0, \qquad \nabla_{A'(A}K_{BC)}=0.
\end{align}
In the compact case, $\omega^{-}_{ab}$ can be any closed anti-self-dual 2-form, that is any Maxwell field $\phi_{A'B'}$. In the ALF case, $\phi_{A'B'}=\nabla_{A(A'}X^{A}_{B')}$, where $X^a$ is the Killing field associated to the Killing spinor $K_{AB}$ \cite{BiquardOzuch}.

We can compute the linearized Weyl spinor using formula \eqref{linWeylSpinor}. Since the trace of \eqref{dest} vanishes, we have
\begin{align*}
\vartheta\Psi_{ABCD}
= \frac{1}{2}\nabla_{(A}^{A'}\nabla_{B}^{B'}\bigl[K_{CD)}\phi_{A'B'}\bigr]
= 0
\end{align*}
where the second equality follows from \eqref{MKS}.
\end{proof}

\subsection*{Acknowledgements}

This work was initiated while the authors were participating in the conference ``Einstein Spaces and Special Geometry'' at Institut Mittag-Leffler in Djursholm, Sweden, in July 2023. BA acknowledges support of the Institut Henri Poincar\'e (UAR 839 CNRS-Sorbonne Universit\'e) and LabEx CARMIN (ANR-10-LABX-59-01) in Paris, during a research stay.

\pdfbookmark[1]{References}{ref}
\LastPageEnding

\end{document}